\documentclass[a4paper,onecolumn,11pt, accepted=2025-07-21]{quantumarticle}
\pdfoutput=1
\usepackage[T1]{fontenc}

\usepackage{graphicx}
\usepackage{changes}
\usepackage{hyperref}
\usepackage{color}
\usepackage[numbers]{natbib}

%%%%%%%  ADDED BY ME
\usepackage{thmtools, thm-restate}

\usepackage{amsmath,amssymb,amsthm}
\usepackage{bm}
\usepackage{mathtools}
\usepackage{dsfont}
\usepackage{braket}
% \usepackage{caption}
% \usepackage{subcaption}

%Operators
\newcommand{\D}{\mathcal{D}}

\renewcommand{\L}{\mathcal{L}}
\renewcommand{\H}{\mathcal{H}}
%quantum
\newcommand{\ketbra}[1]{\ket{#1}\!\bra{#1}}
\newcommand{\Tr}{\mathrm{Tr}}
%vsp and betaPM

\newcommand{\epsBetaPM}{\epsilon_{{}_{\text{BKP}}}}
\newcommand{\epsOrig}{\epsilon_{{}_{\text{orig}}}}

\newcommand{\piBetaPM}{\pi_{{}_{\text{BKP}}}}
\newcommand{\piOrig}{\pi_{{}_{\text{orig}}}}%P_guess
\newcommand{\Pim}{P_{\text{guess}}(A|Y\mathcal{Z}(Q))}
\newcommand{\Pgen}{P_\text{{guess}}(A|YZ\Phi_{t_{comp}}(Q_{\text{in}}))}

\newcommand{\abort}{P(\text{abort})}

\renewcommand{\emph}[1]{\textit{#1}}
%betaPM
\newcommand{\s}{\bm{s}}
\newcommand{\e}{(i,j)}
\renewcommand{\S}{\mathcal{S}}

\newcommand{\Mlist}{\bm{s}_M}
%defining enviroments
\newtheorem{definition}{Definition}[section]
\newtheorem{theorem}{Theorem}[section]
\newtheorem{lemma}{Lemma}[section]
\newtheorem{corollary}{Corollary}[section]

\theoremstyle{remark}
\newtheorem{remark}{Remark}[section]

\begin{document}

\title{Hybrid Quantum Cryptography from Communication Complexity}

\author{Francesco Mazzoncini} 
\affiliation{T\'el\'ecom Paris-LTCI, Institut Polytechnique de Paris, 19 Place Marguerite Perey, 91120 Palaiseau, France}
\affiliation{Orange Innovation, Orange Gardens, 44 avenue de la République, Châtillon,
France}
\author{Balthazar Bauer}
\affiliation{LMV, Université de Versailles – Saint-Quentin-en-Yvelines, 55 Avenue de Paris, 78646 Versailles, France}
\author{Peter Brown}
\affiliation{T\'el\'ecom Paris-LTCI, Institut Polytechnique de Paris, 19 Place Marguerite Perey, 91120 Palaiseau, France}
\author{Romain Alléaume}
\affiliation{T\'el\'ecom Paris-LTCI, Institut Polytechnique de Paris, 19 Place Marguerite Perey, 91120 Palaiseau, France}

\maketitle

\begin{abstract}
We introduce an explicit construction for a key distribution protocol in the Quantum Computational Timelock (QCT) security model, where one assumes that computationally secure encryption may only be broken after a time much longer than the coherence time of available quantum memories.
 Taking advantage of the QCT assumptions, we build a key distribution protocol called HM-QCT from the Hidden Matching problem for which there exists an exponential gap in one-way communication complexity between classical and quantum strategies.

We establish that the security of HM-QCT against arbitrary i.i.d.\ attacks can be reduced to the difficulty of solving the underlying Hidden Matching problem with classical information. Legitimate users, on the other hand, can use quantum communication, which gives them the possibility of sending multiple copies of the same quantum state while retaining an information advantage. 
This leads to an everlasting secure key distribution scheme over $n$ bosonic modes. Such a level of security is unattainable with purely classical techniques. Remarkably, the scheme remains secure with up to $\mathcal{O}\left( \frac{\sqrt{n}}{\log(n)}\right)$ input photons for each channel use, extending the functionalities and potentially outperforming QKD rates by several orders of magnitude.
% \keywords{Quantum Cryptography  \and Communication Complexity \and Information Theory.}
\end{abstract}
\section{Introduction}
\subsection{Quantum Cryptography}
Quantum cryptography has been largely defined \cite{bennettQuantumCryptographyPublic2014} as a novel form of cryptography that does not rely on computational hardness assumptions but on quantum means, and in particular quantum communications, to achieve information-theoretic security.
Encoding classical information redundantly, on multiple copies of the same quantum state, could be highly beneficial from an engineering viewpoint, allowing for higher rates and better resilience to loss. However, this is a problem for the security of many quantum cryptography protocols as it would allow the adversary to gain more information about the underlying state than if just a single copy is sent.
This limitation translates into a mean photon number that is typically upper bounded by 1 in QKD protocols, and more generally into the existence of a fundamental rate-loss trade-off that severely limits the distances over which we can perform QKD \cite{pirandolaFundamentalLimitsRepeaterless2017}.

In this work, we explore a new approach to quantum cryptography, by considering a hybrid security model. In particular, we unlock the possibility of sending multiple copies of the same state to perform key establishment with \emph{everlasting security} \cite{unruhEverlastingMultipartyComputation2013} with performances that go beyond standard QKD. We specifically consider a cryptographic protocol built on top of the Hidden Matching quantum communication complexity problem \cite{gavinskyExponentialSeparationsOneway2007,bar-yossefExponentialSeparationQuantum2004}, for which there exists an exponential gap between classical and quantum strategies. We prove its security by establishing a reduction to the classical strategies for this communication complexity problem, effectively connecting the field of communication complexity and quantum cryptography.

\subsection{QCT Security model}
\label{sec:intro_QCT}
A novel security model called \textit{Quantum Computational Time-lock (QCT}) was introduced in \cite{vyasEverlastingSecureKey2020}, building a bridge between the often disparate worlds of classical and quantum cryptography.
The model is based on two nested assumptions. The first one is that Alice and Bob can use a $t_{comp}$-secure encryption scheme.
\begin{definition}[$t_{comp}$-secure encryption scheme]
\label{def:encr_scheme}
An encryption scheme (\texttt{Gen}; \texttt{Enc}; \texttt{Dec}) is said to be $t_{comp}$-secure if it is computationally secure with respect to any unauthorized attacker Eve for a time at least $t_{comp}$, after a ciphertext is exchanged on the classical channel.
\end{definition}
The second assumption is that an adversary Eve cannot reliably store a quantum state during a time larger than $t_{comp}$ i.e. that she has access to what we call a $(t_{comp},\delta)$-noisy quantum memory, defined as follows.
\begin{definition} [$(t_{comp},\delta)$-noisy quantum memory]
\label{def:qmemory}
A $(t_{comp},\delta)$-noisy quantum memory is a Markovian time-dependent quantum memory $\Phi_{t}$ such that at time $t_{comp}$:
 \begin{equation}
 \label{assum:delta-memory}
\lVert \Phi_{t_{comp}} - \mathcal{F} \rVert_{\diamond} \le \delta \;,
\end{equation}
where $\mathcal{F}(\rho) \coloneqq \frac{\Tr[\rho]}{d_{out}}\mathds{1}_{d_{out}}\;,$ $\lVert \cdot \rVert_{\diamond}$ is the diamond norm \cite{watrousTheoryQuantumInformation2018a} and $d_{out}$ is the dimension of the output of the quantum memory.
\end{definition}
\noindent
In other words, a $(t_{comp},\delta)$-noisy quantum memory is a quantum memory that is hard to distinguish (parametrized by a parameter $\delta$) from a completely mixing channel $\mathcal{F}$, when it stores a quantum state for a time $t_{comp}$ or longer. One can note that assuming that the coherence time of available quantum memories is much shorter than $t_{comp}$ corresponds to taking $\delta \ll 1$.

 \subsubsection{Validity of QCT model}
 The validity of the QCT model is solidly grounded in practice when one considers existing and prospective quantum storage capabilities \cite{heshamiQuantumMemoriesEmerging2016} and puts them in perspective with an extremely conservative lower bound on the time $t_{comp}$ for which current encryption schemes would be considered secure, such as $t_{comp} \ge 10^{5} \, s \sim 1$ day. Moreover, it is interesting to understand that although the QCT assumptions set some limits to the scaling of quantum error-corrected quantum memory, it does not rule out the possibility of having useful quantum computers. Extrapolating for instance on \cite{gidneyHowFactor20482021} we see that 20 million noisy (with physical gate error $10^{-3}$) qubits would be sufficient to factor a RSA 2048 key, using $10^4$ logical qubits. However, considering the same resources, and the same number of logical qubits, they could be stored during only few hours. This would hence not rule out the conservative QCT assumptions mentioned above. 
%\\[0.2cm]
%\noindent
We should also stress that the QCT approach enables us to build key establishment schemes that offer everlasting security. This means that the secret keys can be provably secure against an adversary who is computationally unbounded after quantum storage decoherence, where the decoherence time to be considered is the one technologically available at the time of protocol execution. In particular, security holds against any future progress of the attacker's computational and quantum storage capabilities.
% To conclude, A first version of the QCT security model has been introduced in \cite{vyasEverlastingSecureKey2020} together with a key agreement protocol called MUB-Quantum Computational Timelock, where Alice encodes a single bit onto a qudit state selected from a set of $d+1$ mutually unbiased bases.  
\subsection{Our work}

\subsubsection{Sketch of the protocol} In our work we introduce an explicit construction for a new key distribution protocol called Hidden-Matching Quantum Computational Timelock (HM-QCT). It is built on top of a computational problem with a boolean output, called $\beta$-Partial Matching $(\beta PM)$ \cite{gavinskyExponentialSeparationsOneway2007}, for which $\Omega(\sqrt{n})$ bits of communication from Alice to Bob are required, against only $\mathcal{O}\left(\log(n)\right)$ qubits, with $n$ the length of input $x$. In each round of the HM-QCT protocol Alice generates both inputs $x$ and $y$ and shares the latter with Bob using a computationally secure encryption scheme. Alice and Bob can then solve the $\beta PM$ protocol with a quantum strategy to extract a bit, sending $m$ copies of the same $n$-dimensional quantum state.
See Figure \ref{fig:HM-QCT_protocol} for a pictorial representation.
Finally, by performing standard classical post-processing to their string of bits, they can distill a secure key.
\begin{figure}[htb!]
\vspace{0.4cm}
\centering
\includegraphics[width= 0.8\textwidth]{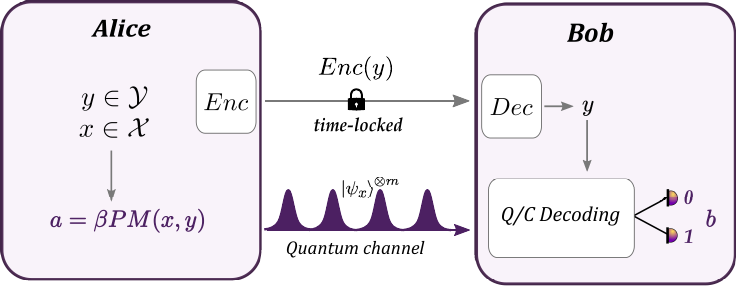}
\caption{One round of the HM-QCT protocol}
\label{fig:HM-QCT_protocol}
\end{figure}

%In our security analysis we considered an idealized case where one runs infinitely many rounds of the protocol and a possible adversary behaves independently and identically in each round. Within this setting,

 \subsubsection{Advantages over standard QKD}
Our results illustrate that the QCT hybrid security model constitutes a promising route to enhance the capabilities and effectiveness of quantum cryptography, while retaining some core advantage against classical cryptography: the possibility of providing everlasting security. In particular, our protocol offers the following benefits: 
 \begin{itemize}
\item \textbf{Boosted key rates:} Security can be achieved while sending up to $\mathcal{O}\left( \frac{\sqrt{n}}{\log(n)}\right)$ photons per channel use, overcoming the standard limit of one photon per channel use. As detailed in Section \ref{sec:key_estab}, the HM-QCT protocol, based on the $\beta PM$ problem, can be implemented with 2 single-mode threshold detectors, and performance can hence be benchmarked with 2-output-mode protocols. The fact that security can be achieved with many photons per channel use leads to asymptotic achievable secret key rates that can be boosted by a factor $\mathcal{O}\left( \frac{\sqrt{n}}{\log(n)}\right)$ with respect to BB84 QKD. As illustrated on Figure \ref{fig:keyrate_comparison}, HM-QCT could moreover overcome the fundamental secret key capacity \cite{pirandolaFundamentalLimitsRepeaterless2017}.
\item \textbf{Improved functionalities:} A notable advantage of enabling multiple copies per channel use is the potential to consistently hit the classical capacity of one bit per channel use over relatively short distances, as illustrated in Figure\ \ref{fig:keyrate_zoom} — a feat impossible in standard QKD. Moreover, multiple photons not only offer improved efficiency but also enable multicast key distribution (conference key agreement \cite{murtaQuantumConferenceKey2020})  with up to $\mathcal{O}\left(\frac{\sqrt{n}}{\log(n)}\right)$ authorized Bobs simultaneously.
\item \textbf{Security with untrusted measurement devices}: Eve's information can be upper bounded by considering only the dimension of the quantum state prepared by Alice and does not require (as in standard QKD) any information about Bob's measurement results. 
Additionally, since Alice can send multiple photons, HM-QCT can operate at short distances with a detection probability approaching 100\% at Bob. Combining these features enables us to define a variation of the protocol, called SDI-HM-QKD, which exhibits semi-device-independent security.

%Consequently, as discussed in Section \ref{subsec:SDI}, we considered a variation of the protocol without post-selection to avoid side-channel attacks, showing how this still allows us to provide high key-rate at a metropolitan distance, even when Bob's device is considered completely untrusted.

\end{itemize}
 \noindent
 
Moreover, the security proof that we have established for a key distribution scheme based on the $\beta PM$ problem could also be applied to any one-way communication complexity problem with a boolean output. The results obtained in this article hence also pave the way to the study of other communication complexity problems with larger gaps between classical and quantum strategies, which would lead to even greater performances.
%
%We also point out that the Quantum Computational Time-lock framework and the proof technique developed for HM-QCT are based on a reduction to a communication complexity problem and that this reduction can be of independent interest in other quantum cryptographic contexts. 

\subsubsection{Technical contributions}
One of the main technical achievements has been to reduce Eve's general i.i.d.\ attack strategy, represented in Figure \ref{fig:Eve_general_strategy}, to a  strategy where she has no access to any quantum storage at the cost of an additive linear term in the noise parameter $\delta$, as formally proved in Theorem \ref{theo:boundP_guess}. Since this result is solely based on the fact that an eavesdropper has access to a noisy memory in Definition \ref{def:qmemory}, Theorem \ref{theo:boundP_guess} could be of independent interest for other protocols that exploit noisy storage.

Once we reduce to an eavesdropper with no quantum memory, a central result of our work is the exploitation of the communication gap between quantum and classical strategies to build a secure key distribution protocol.   In particular, the security reduction to the communication complexity of the  $\beta PM$ problem  cannot be done directly. First, since Alice is sending $m$ copies of the same $n$-dimensional quantum state, the amount of information that she is leaking to Eve about the input $x$  is at most $m \log(n)$ bits thanks to the Holevo bound. This simply reduces the security proof to the study of the \textit{information complexity} \cite{bravermanInformationEqualsAmortized2011} of the classical $\beta PM$ problem, a quantity which describes the amount information exchanged about the input needed to reliably solve the complexity problem. 
%However, we needed further analysis to relate the minimum amount of classical information transmitted to solve the $\beta PM$ problem to the lower bound in terms of classical communication  cost. In particular, we exploited a known one-way compression scheme, described in Appendix \ref{appendix:useful}, to  map the   communication complexity to the information complexity in the one-way setting. The  bound obtained is formally presented in Lemma \ref{lem:ICtoDC}.  
%Note that the result we derive is not specific to the $\beta PM$ problem and could be re-applied to other one-way communication complexity problems.

%While many articles on communication complexity focus on asymptotic scalings of input dimensions, our work necessitates precise knowledge of the lower bounds of communication complexity in the non-asymptotic regime. With these bounds, we can accurately determine how the guessing probability scales with the dimension and the number of copies sent by Alice, as detailed in Theorem \ref{theo:P_guess_fin}.

Through mapping communication complexity to information complexity in the one-way setting in Lemma \ref{lem:ICtoDC}, we demonstrate in Theorem \ref{theo:P_guess_fin} that Eve's one-round guessing probability is safely bounded away from $1$ when Alice sends $\mathcal{O}\left( \frac{\sqrt{n}}{\log(n)}\right)$ copies of the quantum state.

%Finally, to provide a comprehensive analysis, we also investigate in Section \ref{sec:upperbound} the achievable key rate obtained when considering the best-known protocol for the $\beta PM$ problem. Specifically, we demonstrate the potential to surpass the rate-loss limitations of standard QKD by implementing the HM-QCT protocol for experimentally feasible values of $n$.

% Finally, for the sake of clarity, we present in the following an informal summary of these asymptotic results.
%
%\begin{theorem}[HM-QCT protocol $\sim$ informal]
%Let $C>0$ and $0<k<1$. Suppose Alice and Bob execute the HM-QCT protocol sending  $m = C\left(\frac{\sqrt{n}}{\log(n)}\right)^{1-k}$ copies of an $n$-dimensional quantum state at each round of the protocol. Assuming that an eavesdropper Eve behaves independently and identically in each round of the protocol and that she has access to a $(t_{comp},\delta)$-noisy quantum memory with $\delta \ll 1$,  her information about the raw key approaches zero as $I(A:E)\sim\left(\frac{\log(n)}{\sqrt{n}}\right)^{k/3}$. Moreover, for any  value $T$ of the total transmittance of the quantum channel  the achievable key rate  can reach the classical capacity of one bit per channel use provided that Alice sends $m \gg \frac{1}{T}$ copies per channel use.
%\end{theorem}

%
%

\subsection{Previous work}
Communication complexity \cite{yaoComplexityQuestionsRelated1979} is a model of computation where two parties, Alice with input $x$ and Bob with input $y$, collaborate to compute with high probability the value of $f(x,y)$, where $f$ is a function (or relation) defining the computational problem that the players have to solve.
An exponential separation in the required amount of communication between quantum and classical strategies has been already shown experimentally \cite{kumarExperimentalDemonstrationQuantum2019} and then used to build a private quantum money scheme \cite{kumarPracticallyFeasibleRobust2019}. However,  to the best of our knowledge, this work describes the first explicit quantum key distribution protocol that guarantees security based on this exponential separation.

On the other hand, it is not the first time that  protocols have relied on  physical limitations of the quantum storage capabilities to extend the functionality of QKD. In the quantum bounded-storage model, for example, by limiting the amount of quantum information that an eavesdropper can store and process, QKD protocols can be designed to allow for higher error rates compared to the standard model with unbounded adversaries \cite{damgardTightHighOrderEntropic2007}.  An additional way to provide high resilience to noise, either caused by a malevolent Eve or simply environmental, is to perform QKD with high dimensional quantum states \cite{cozzolinoHighDimensionalQuantumCommunication2019} in the standard security model.
However,  both these frameworks are still highly susceptible to loss,  since their security is  limited to send only one photon per channel use.

Another example is also the theoretical framework of Quantum Data Locking (QDL) \cite{lupoQuantumLockedKeyDistribution2014}, where the security  of communication schemes is based on the even stricter assumption that quantum storage fully decoheres (i.e. $\delta =0$) after some finite time.  Existing work on QDL is  either restricted to single-photon encoding \cite{lupoQuantumLockedKeyDistribution2014,lumQuantumEnigmaMachine2016}, with  limitations in terms of loss-tolerance, or resorts to constructions based on random coding arguments \cite{lupoContinuousvariableQuantumEnigma2015} for which practical  decoding measurements with current technologies are not possible.

Security models with limitations  in the accuracy of the storage of  quantum states do not solely focus on key distribution schemes. The noisy-storage model \cite{wehnerCryptographyNoisyStorage2008a,koenigUnconditionalSecurityNoisy2012} is indeed a well-known security model, which generalizes the quantum bounded-storage model. It has been used to prove security of two-party protocols such as oblivious transfer \cite{santosQuantumObliviousTransfer2022} and bit commitment \cite{almeidaBriefReviewQuantum2014}, for which full unconditional security is impossible \cite{loWhyQuantumBit1998,buhrmanCompleteInsecurityQuantum2012}. Experimental demonstrations of these protocols were moreover performed, with typical hardware used in key distribution protocols, both for discrete \cite{ngExperimentalImplementationBit2012,ervenExperimentalImplementationOblivious2014} and continuous variable protocols \cite{furrerContinuousvariableProtocolOblivious2018}.
 However,  unlike the QCT model,  both QDL and the noisy-storage model do not rely on any computational assumptions, but they force the adversary to store the quantum states by intentionally delaying the classical post-processing. While this solution is enough to prove security, it has clear setbacks in the speed of the key exchange which is an important practical consideration.

\section{Preliminaries}
\subsection{General notation}
We reserve capital letters for random variables and distributions, calligraphic letters for sets, and lowercase letters for elements of sets. Let $\mathcal{S}$ be a set. We use $\Delta(\mathcal{S})$ to denote the family of all probability distributions on $\mathcal{S}$.
We use $\mathcal{D}(\H)$, $ \mathcal{L}(\H)$ and $\mathcal{P}(\H)$ to denote the sets
of density operators, square linear operators and positive semidefinite operators, respectively, acting on a  finite  dimensional Hilbert space $\H$.  Moreover, we  will use extensively the notation $d_{[\cdot]} \coloneqq \operatorname{dim} [\H_{[\cdot]}]$. 
The trace norm on $\L(\H)$ is defined as  $\lVert \sigma \rVert_1 \coloneqq \Tr\sqrt{\sigma \sigma^{\dag}}$.
% We call $\operatorname{Bin}(k, n, p) := \binom{n}{k} p^k (1-p)^{n-k} $ the Binomial distribution, where $\binom{n}{k} = \frac{n!}{k!(n-k)!}$.

Consider a classical random variable $A$ with distribution $P_A$ on some set $\mathcal{A}$. Since we are going to treat classical and quantum variables with the same formalism, it is useful to view $A$ as a particular case of a quantum system. We shall identify the classical values $a \in \mathcal{A}$ with some fixed orthonormal basis $\ket{a}$ on some Hilbert space $\H_{\mathcal{A}}$. The random variable A can then be identified with the quantum state $\rho_A = \sum_{a \in \mathcal{A}} P_A(a) \ket{a}\! \bra{a}.$
We can extend this representation to hybrid settings where the state $\rho_a$ of a quantum system $\H_Q$ depends on the value of $a$ of a classical random variable $A$. Such a state is called a \textit{classical-quantum state}, or simply \textit{cq-state}, and takes the form $\rho_{AQ}= \sum_{a\in \mathcal{A}} P_A(a)\ket{a}\! \bra{a} \otimes \rho_a$.

%An equivalent definition can be used to describe quantum operations, i.e. completely positive trace preserving (CPTP) maps,  whose outcomes are partly classical.
%A CPTP map $\mathcal{E}: \, \mathcal{D}\left(\H_1\right) \rightarrow\mathcal{D}\left(\H_\mathcal{A}\right) \otimes \mathcal{D}\left(\H_2\right)  $ is said to be classical on $\H_{\mathcal{A}}$ if it can be written as $\mathcal{E}(\sigma) = \sum_{a} \ket{a}\!\bra{a} \otimes \mathcal{E}^a(\sigma)$,
% where for any $a \in \mathcal{A}$, $\mathcal{E}^a$ is a trace non-increasing complete positive map from $\mathcal{D}\left(\H_1\right)$ to $\mathcal{D}\left(\H_2\right)$, with the additional condition that $\sum_a \mathcal{E}^a$  is trace-preserving.
%One should observe that a measurement on $\H_1$ with outcomes in $\mathcal{A}$ can be seen as a CPTP  map from $\mathcal{D}\left(\H_1\right)$ to $\mathcal{D}\left(\H_{\mathcal{A}}\right)$ classical in $\H_{\mathcal{A}}$. In general we refer to CPTP maps as quantum channels. 
\subsection{Classical and quantum information theory}
We need to define some notions of classical and quantum information theory. First, we  quantify the amount of information shared between two random variables $A$ and $B$ with distribution $P_{AB} \in \Delta (\mathcal{A} \times \mathcal{B})$ and  marginal distributions $P_A$ and $P_B$ respectively. We call  $I(A:B) \coloneqq H(A)-H(A|B)$, the \emph{mutual information}, where $H(A) \coloneqq -\sum_a P_A(a) \log (P_A(a))$ is the \emph{Shannon entropy} and
$ H(A|B) \coloneqq  -\sum_{a,b} P_{AB}(a,b) \log \left(\frac{P_{AB}(a,b)}{P_A(a)}\right)$  is the conditional entropy. Throughout this work, the function $\log$ will denote the logarithm base 2. 

Given any cq-state $\rho_{AQ}$, another  useful quantity  in quantum cryptography is the probability  of guessing the random variable $A$ for an adversary holding a quantum system $Q$, given by $P_{\text{guess}}(A|Q) \coloneqq \max_{\Pi} \sum_a P_A(a) \Tr[ \Pi(a) \rho_a]$,
 where we maximize over all POVMs $\Pi :A \rightarrow \mathcal{P}(\H_{Q})$.  Finally, we can define a conditional entropy, called the \emph{min-entropy}, given by $ H_{\min}(A|Q) \coloneqq - \log(P_{\text{guess}}(A|Q))$.

Now we introduce the generalization of Shannon entropy for quantum states, called the \emph{von Neumann entropy}. The von Neumann entropy of $\rho \in\D(\H_A)$ is $H(A)_{\rho}:=-\Tr[\rho\log(\rho)]$.
One can notice that by considering a classical state we recover back the Shannon entropy. 
For a bipartite state $\rho_{AE}\in\mathcal{D}(\H_A \otimes \H_E)$, we use the notation
$\rho_{E}$ for $\Tr_A[\rho_{AE}]$  and define the \emph{conditional von Neumann entropy} of system $A$ given system $E$ when the joint system is in the state $\rho_{AE}$ by $H(A|E)_{\rho}:=H(AE)_{\rho}-H(E)_{\rho}.$
We can finally define the \emph{quantum mutual information} as $I(A:B) \coloneqq H(A)_{\rho} - H(A|B)_{\rho}$.

\subsection{One-way Communication and Information Complexity}
\label{subsec:CC}
Communication complexity is a computation model introduced by Yao \cite{yaoComplexityQuestionsRelated1979}. It involves two players, Alice and Bob, who receive inputs: Alice receives $x$ from set $\mathcal{X}$ and Bob receives $y$ from set $\mathcal{Y}$. Their objective is to compute the value of $f(x,y)$ with high probability using allowed communication methods (classical or quantum). In this article, we focus on one-way settings, where only Alice can send messages to Bob. The message sent by Alice to Bob is called the \emph{transcript}, and Bob's final guess of $f(x,y)$ is called the \emph{output}.
In the public-coin model, they share a random string $r$, while in the private-coin model, they have private random strings $r_A$ and $r_B$. 
We start by defining the \emph{communication cost} of a protocol and the \emph{one-way distributional complexity} in the public-coin setting.

%
%\textbf{Notation}: Given a protocol $\pi$,  we call the transcript $\pi(x)$ the concatenation of the public randomness with  the message that Alice sends during the execution of $\pi$. When referring to the random variable denoting the transcript, rather than a specific transcript, we will use the notation $\Pi$. Lastly, we will denote Bob's final output of the protocol by $\pi(x,y)$.

\begin{definition}[Communication Cost]
\label{def:communicationcost}
The communication cost  of a public coin protocol $\pi$, denoted by $CC(\pi)$, is the maximum number of bits that can be transmitted in any run of the protocol.
\end{definition}
%\begin{definition}[Communication Cost]
%The communication cost  of a one-way public-coin protocol $\pi$, denoted by $CC(\pi)$, is the maximum number of bits that can be transmitted in any run of the protocol, i.e.
%
%\begin{equation}
%CC(\pi) \coloneqq \max_x \mathbb{E}_{R \leftarrow r} |\pi(x,r)|\;.
%\end{equation}
%
%An equivalent definition can be given for the communication cost of deterministic one-way protocols:
%\begin{equation}
%CC^D(\pi) \coloneqq \max_x |\pi(x)|\;.
%\end{equation}
%\end{definition}

 \begin{definition}[One-way distributional complexity]
 \label{def:distributional}
 For a function $f \, : \, \mathcal{X} \times \mathcal{Y} \rightarrow \mathcal{Z}$, a distribution $\mu \in \Delta(\mathcal{X} \times \mathcal{Y})$ and a parameter $\epsilon >0$, we define the one-way distributional complexity $D^1_{\mu}(f, \epsilon)$ as the communication cost of the cheapest one-way deterministic protocol for computing $f$ on inputs sampled according to $\mu$ with error $\epsilon$, i.e.
 
 \begin{equation}
 D^1_{\mu}(f, \epsilon) \coloneqq \min_{\pi: \,P_{(X,Y)}[\pi_{out}(x,y), \neq f(x,y)]  \le \epsilon} CC(\pi)\;,
 \label{eq:defin_distibutional}
 \end{equation} 
 where $\pi_{out}(x,y)$ describes Bob's output.
% \begin{equation}
% D^1_{\mu}(f, \epsilon) \coloneqq \min_{\pi: \,\bm{P}_{(x,y)\sim \mu}[\pi(x,y), \neq f(x,y)]  \le \epsilon} CC(\pi)\;.
% \end{equation}

 \end{definition}

 \noindent
We also consider different relevant quantities that apply an information-theoretic formalism to computational settings. 

 \begin{definition}[External Information Cost]
Fix a one-way private-coin communication protocol $\pi$ on inputs $\mathcal{X} \times \mathcal{Y}$ and a distribution $\mu \in \Delta(\mathcal{X} \times \mathcal{Y})$. The one-way external information cost of $\pi$ with respect to $\mu$, denoted by $IC^1_{\mu}(\pi)$ is defined as
\begin{gather}
IC^1_{\mu}(\pi) \coloneqq I(\Pi:X)\;,
\end{gather}
where $\Pi = \Pi(X,R_A)$ describes the transcript of the protocol.
 \end{definition}

\noindent Intuitively, the external information cost captures  how much information an external viewer who does not know the inputs learns about $X$. Similarly to the communication version, we can define the information complexity of a problem as the infimum over all possible protocols.

\begin{definition}[One-way external information complexity]
\label{definition:info_complex}
Let $\pi$  be a one-way private-coin protocol on inputs $\mathcal{X} \times \mathcal{Y}$ and  $\mu \in \Delta(\mathcal{X} \times \mathcal{Y})$. The one-way  external information complexity of $f$ with error tolerance $\epsilon$ is defined as the infimum of the one-way external information cost over all private-coin protocols $\pi$ for computing $f$  that achieve an error no larger than $\epsilon$  with respect to $\mu$:

\begin{gather}
\label{eq:info_complex}
IC^1_{\mu}(f, \epsilon) \coloneqq \inf_{\pi: \,P_{(X,Y),R_A,R_B}[\pi_{out}(x,y,r_A,r_B) \neq f(x,y)]  \le \epsilon} IC^1_{\mu}(\pi)\;,
\end{gather}
where $\pi_{out}(x,y,r_A,r_B)$ describes Bob's output.
\end{definition}
In this case we only  considered a private-coin model, since one can see that any public randomness can be simulated by a private-coin model: Alice can send to Bob a portion of $r_A$ together with the private-coin transcript. Now they can use this portion as shared randomness $r$. However, while this extra step increases the communication cost, it doesn't affect the external information cost.
\subsubsection{From distributional to information complexity}
As it turns out in Section \ref{Sec:security_analysis}, simply having a gap between quantum and classical communication complexity is insufficient
for cryptographic applications within the QCT model. Instead, what is crucial is the presence of a gap between quantum communication complexity and classical information complexity. 

 In \cite{harshaCommunicationComplexityCorrelation2010}, the authors demonstrated that it is possible to compress each message of a protocol to approximately its contribution to the external information cost plus some additional constant term.
While the authors focused only on the scaling laws, we carefully derived all the specific constants for the compression scheme.

\begin{restatable}[Mapping to information complexity]{lemma}{mappingtoinfo}
\label{lem:ICtoDC} 
Let $\epsilon, \, \delta_2>0$, $\mu \in \Delta(\mathcal{X} \times \mathcal{Y})$ and $f:\mathcal{X} \times \mathcal{Y} \rightarrow \mathcal{Z}$. Then
\begin{equation*}
IC^1_{\mu}(f,\epsilon) \ge \frac{\delta_2}{2} D^1_{\mu}(f,\epsilon + \delta_2) -6 \;.
\end{equation*}
\end{restatable}
\noindent
We refer to Appendix \ref{appendix:comp_scheme} for a description on how to derive this result from \cite{raoCommunicationComplexityApplications2020a}.

\subsection{\texorpdfstring{$\beta$}{TEXT}-Partial Matching problem}
\label{subsec:betapm}
In this subsection we shall present the quantum communication complexity problem that we want to use to build a key distribution protocol. Let $n \in \mathbb{N}$. We use the notation $[n] = \{1,..., n\}$. In the following $n$ will be assumed to be even. A \emph{matching} $M$ is a set of pairs $(a,b) \in [n]^2$,  such that no two pairs contain the same index, where, each index is called a \emph{vertex} and a pair of vertices is called an \emph{edge}.  For example if $n=4$ then the set of  edges $\{(1,2),(3,4)\}$ or $\{(2,3)\}$  are valid matchings whereas $\{(1,2),(2,3)\} $ are not. See Figure \ref{fig:betaPM} for a pictorial representation.

\begin{figure}[htb!]
\vspace{0.4cm}
\centering
         \includegraphics[width=0.6\textwidth]{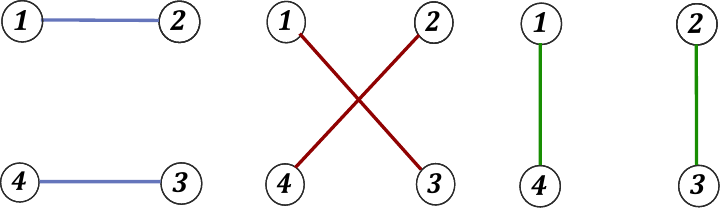}

     \caption{Illustration of a set of perfect matchings for size $n = 4$. For example, considering $x = 1001$, $\omega = 11$, for the first perfect matching in blue we have $Mx = \begin{bmatrix} x_1 \oplus x_2 =1\\ x_3 \oplus x_4=1 \end{bmatrix}$, resulting in $a=0$. }
    
        \label{fig:betaPM}
\end{figure}

We say $M$ is a \emph{$\beta$-matching} if in addition $|M| = \beta n$. 
The $\beta PM$ problem is built around a $\beta$-matching $M$, that constitutes part of the input given to Bob. $M$ consists of a sequence of $\beta n$ disjoint edges $(i_1, j_1) ... (i_{\beta n}, j_{\beta n})$ over $[n]$.  
 We will call  $\mathcal{M}_{\beta n}$ the set of all $\beta$-matchings on $n$ bits:  if $\beta = \frac{1}{2}$  the matching is called \textit{perfect} and if $\beta < \frac{1}{2}$  the matching is called \textit{partial}. 
$M$ can be represented as a $ \beta n \times n$  matrix with  only a single one in each column and two ones per row, namely at position $i_l$ and $j_l$  for the $l$-th row of matrix $M$. Let $x \in \{0, 1\}^n$, applying the matching $M$ to $x$ leads to the $\beta n$-bit string  $v = v_1,..., v_l,..., v_{\beta n}$  where $v_l = x_{i_l} \oplus x_{j_l}$. Finally, we call $a^{\beta n}$ a vector of dimension $\beta n$ with value $a$ in each component.
Using the notation above, we can finally define the $\beta PM$ problem:\\[0.2cm]
\textbf{Alice's input}: $x \in \{0,1\}^n$. \\
\textbf{Bob's input}: $M \in\mathcal{M}_{\beta n}$ and $\omega \in \{0,1\}^{\beta n}$. \\
\textbf{Promise P}: given a bit  $a \in  \{0,1\}$,  then  $\omega = M x \oplus a ^{\beta n}$.\\
\textbf{Communication Model}: Classical or Quantum one-way communication between Alice and Bob.\\
\textbf{Goal}: Bob outputs $b=a$ with high probability.\\[0.2cm]
\noindent
We shall call for clarity $\mathcal{X} \coloneqq  \{0,1\}^n $,  $y \coloneqq (M, \omega) $ and $\mathcal{Y} \coloneqq \mathcal{M}_{\beta n} \times\{0,1\}^{\beta n} $.
Moreover, we define the (partial) function $\beta PM: \mathcal{X} \times \mathcal{Y} \rightarrow \{0,1\}$ as  the function that randomly picks an element from the vector $M x \oplus \omega$.\\[0.2cm]
\textbf{Input distribution:} we call $\mu  \in \Delta(\mathcal{X} \times \mathcal{Y})$ the  input probability distribution uniform over $x\in \{0,1\}^n$ and $M\in\mathcal{M}_{\beta n}$. The inputs $x$ and $M$ together determine the $\beta n$-bit string $v = Mx$. To complete the input distribution, with probability $1/2$ we set $\omega = v$ and with probability $1/2$ we set $\omega = \bar{v}$.

Finally,  one can derive from \cite{gavinskyExponentialSeparationsOneway2007}  the prefactors of the scaling law for the one-way distributional complexity of the $\beta PM$ protocol.
 
\begin{theorem}
\label{theo:exponential_sep}
Let $\beta \in (0, 1/4]$, $\forall \epsilon \in (0,\frac{1}{2}]$. Then 
 \begin{equation}
 D_{\mu}^1(\beta PM,\epsilon)\ge k(\epsilon) \sqrt{n} + d(\epsilon)\;,
 \end{equation}
where 
\begin{equation}
\label{eq:k_and_d}
k(\epsilon)= \frac{4\gamma}{25 \sqrt{\beta}} \left(\frac{1}{2} - \epsilon \right)^2 \hspace{0.3cm} \text{and}\hspace{0.3cm} d(\epsilon)= 2\log\left( \frac{1}{2} - \epsilon \right) + 2(\log(2) - \log(5)) \;,
\end{equation} 
with $\gamma = \frac{1}{8e}$.
\end{theorem}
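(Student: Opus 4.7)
The statement is the standard Gavinsky--Kempe--Kerenidis--Raz--de Wolf lower bound for one-way classical communication on the (partial) matching problem, but with the constants made explicit; so the plan is to replay their proof carefully, tracking every numerical factor.

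\textbf{Reduction to deterministic protocols.} By Yao's principle it is enough to bound, for every deterministic one-way protocol $\pi$ with communication cost $c = CC(\pi)$, the $\mu$-error probability from below. Such a $\pi$ is specified by an encoding $A:\{0,1\}^n \to \{0,1\}^c$ (so Alice's message partitions $\mathcal{X}$ into at most $2^c$ parts $S_1,\ldots,S_{2^c}$) together with a decoding rule used by Bob. Conditioned on Alice having sent the label of a fixed part $S$ of density $p_S = |S|/2^n$, the posterior distribution on $x$ is uniform on $S$; then $y = (M,\omega)$ is chosen according to $\mu|_S$, and Bob must guess the bit $a$ whose value decides whether $\omega = Mx$ or $\omega = \overline{Mx}$.

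\textbf{Advantage of a single part via hypercontractivity.} For a fixed $M$ and a fixed part $S$, Bob's optimal bias over random choice of the edge used to define $\beta PM$ is governed by how much the indicator $1_S$ correlates with the parity characters $\chi_{i,j}(x) = (-1)^{x_i \oplus x_j}$ for the $\beta n$ edges $(i_l,j_l)$ of $M$. A direct calculation shows that averaged over $M \sim \mathcal{M}_{\beta n}$ the expected squared bias per edge is the level-$2$ Fourier weight $\mathbf{W}^{=2}[1_S]/\binom{n}{2}$. The Bonami--Beckner level-$2$ hypercontractive inequality then yields
\begin{equation*}
\mathbf{W}^{=2}[1_S] \;\le\; \tfrac{1}{2}\,\|1_S\|_{4/3}^{\,4} \;\le\; \tfrac{1}{2}\, p_S^{\,3}\cdot 2^n,
\end{equation*}
which by Cauchy--Schwarz (to pass from squared bias to bias, then summed over matching edges and over parts of the partition) gives an upper bound on the total advantage of the form $\tfrac{1}{2} - \epsilon \;\le\; \text{const}\cdot \sqrt{\beta}\cdot 2^{c/2}/\sqrt{n}$. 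The numeric constant $\gamma = 1/(8e)$ is exactly what drops out of the optimal-$p$ step in the hypercontractive calculation after one replaces $\|\cdot\|_{4/3}$ by $\|\cdot\|_2^{2}e^{\ln 2 \cdot p}$-style bounds; I would chase the inequality chain in Gavinsky et al.'s Lemma on partitions and simply collect the prefactor rather than re-optimize it.

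\textbf{Inverting the bound to extract $k(\epsilon)$ and $d(\epsilon)$.} Writing the bound in the contrapositive form ``any protocol of cost $c$ has error $\ge \tfrac{1}{2}-\gamma\sqrt{n/\beta}\cdot 2^{-c/2}\cdot(\text{small})$'' and demanding error $\le \epsilon + \delta_2$, one solves for $c$ by taking logarithms. Rearranging
\begin{equation*}
\tfrac{1}{2} - \epsilon \;\le\; \tfrac{5}{2}\,\sqrt{\beta}\,\gamma^{-1/2}\, 2^{-c/2}\,\sqrt{n}\cdot(\text{prefactors})
\end{equation*}
and taking base-$2$ log yields $c \ge k(\epsilon)\sqrt{n} + d(\epsilon)$ with $k(\epsilon) = \frac{4\gamma}{25\sqrt{\beta}}(\tfrac{1}{2}-\epsilon)^2$ and $d(\epsilon)=2\log(\tfrac{1}{2}-\epsilon) + 2(\log 2 - \log 5)$, once the factor of $5/2$ that appears in the partition-averaging argument is pushed inside the logarithm. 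The assumption $\beta \le 1/4$ is used in the hypercontractive step to guarantee $\beta n \le n/4$, so the expectation over a random $\beta$-matching of the level-$2$ Fourier weight can be estimated by the perfect-matching case without additional loss.

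\textbf{Main obstacle.} The mathematics is standard; the real difficulty is exactly what the authors flag in the surrounding text, namely the bookkeeping: Gavinsky et al.\ state their bound asymptotically ($\Omega(\sqrt{n})$), whereas here one must track every constant through (i) the hypercontractive inequality, (ii) the Cauchy--Schwarz step, (iii) the union bound over the parts of the partition, and (iv) the final logarithmic inversion. The slightly unusual constant $\gamma = 1/(8e)$ and the additive term $2(\log 2 - \log 5)$ are tell-tale fingerprints of which specific inequalities are used, and any re-derivation has to reproduce them exactly for the statement to be sharp.
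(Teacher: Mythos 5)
Your proposal diverges substantially from the paper's actual proof, and the divergence conceals a real gap. The paper does \emph{not} re-derive the Gavinsky et al.\ lower bound from first principles: Appendix \ref{appendix:deriv_th} treats their statement as a black box parametrized by a constant $\gamma$ (``any deterministic protocol of cost at most $\gamma\epsilon_1\sqrt{n/\beta}+\log(\epsilon_1)$ succeeds with probability at most $\tfrac{1}{2}+\tfrac{5}{2}\sqrt{\epsilon_1}$''), performs the elementary substitution $\epsilon_1=\tfrac{4}{25}(\tfrac{1}{2}-\epsilon)^2$ to obtain $k(\epsilon)$ and $d(\epsilon)$ (note $d(\epsilon)=\log\epsilon_1$), and then only verifies that $\gamma=\tfrac{1}{8e}$ satisfies the two numerical constraints \eqref{eq:gammasomme} and \eqref{eq:gamma2} that the original analysis imposes on $\gamma$. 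Your inversion step and the resulting expressions for $k(\epsilon)$ and $d(\epsilon)$ are consistent with this; also, no appeal to Yao's principle is needed, since $D^1_{\mu}$ is already defined as a minimum over deterministic protocols.

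The gap lies in your account of where $\gamma=1/(8e)$ comes from. You attribute it to the level-$2$ Bonami--Beckner inequality applied to $1_S$, i.e.\ to the correlation of a single edge-parity with Alice's message set. But Bob receives the entire $\beta n$-bit string $\omega=Mx\oplus a^{\beta n}$, so the distinguishing advantage between $a=0$ and $a=1$ conditioned on $x\in S$ involves the Fourier coefficients $\widehat{1_S}(T)$ for $T$ a union of $k/2$ matching edges at \emph{every even level} $k$, not just $k=2$; this is exactly why condition \eqref{eq:gammasomme} is a sum $\sum_{\text{even }k=2}^{4c-2}\big(64e\gamma^2\epsilon_1^2/k\big)^{k/2}$ rather than a single term. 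The constant $1/(8e)$ is not the output of a single ``optimal-$p$'' hypercontractive step: it is chosen (with slack) so that this series is dominated by $\epsilon_1^2/2$ \emph{and} so that the separate condition \eqref{eq:gamma2} involving $\sqrt{\beta/n}$ holds; the latter, rather than a perfect-versus-partial-matching comparison, is also where the hypothesis $\beta\le 1/4$ enters the constant-chasing. As written, your plan defers precisely this bookkeeping (``simply collect the prefactor''), which is the only nontrivial content of the proof once the Gavinsky et al.\ result is cited, and a level-$2$-only analysis would not reproduce the stated constants.
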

\begin{proof}
A complete description of how to derive this theorem from \cite{gavinskyExponentialSeparationsOneway2007} is given in Appendix \ref{appendix:deriv_th}.
\end{proof}

\section{Key Establishment Protocol}
\label{sec:key_estab}
\subsection{Security model and definitions}

Considering the novelty of our hybrid security model, the assumptions on the resources of an adversary and the  security properties that can be achieved in this model must be described thoroughly.
%\subsubsection{QCT-model}

In the QCT construction, authorized parties, Alice and Bob, are assumed to be connected via a noiseless and authenticated classical channel and an insecure quantum channel. An adversary, Eve, is assumed to have full access to the input of Alice and Bob's communication channels. Every classical (quantum) message communicated between Alice and Bob over the classical (quantum) channel can be wiretapped by Eve and stored in classical (quantum) memory. With this
pessimistic setting for Eve's channel, we are in a similar set-up as strong data locking \cite{lupoQuantumLockedKeyDistribution2014,lupoQuantumDataLocking2015a} wherein an adversary Eve receives direct inputs from Alice. As stated in Section \ref{sec:intro_QCT}, the QCT model is based on two main assumptions on the power of an eavesdropper: a computational assumption (see Definition \ref{def:encr_scheme}) and a noisy-storage assumption (see Definition \ref{def:qmemory}). 

We start by stating the type of computational assumption needed to prove security in our scheme. What Alice and Bob need is a \emph{semantically secure}  symmetric encryption scheme  against adaptive chosen-ciphertext attacks (CCA2) for a time at least $t_{comp}$. Semantic security means that it is computationally unfeasible for an eavesdropper to learn any partial information about a plaintext from the corresponding ciphertext (see \cite{katzIntroductionModernCryptography2014} for a formal definition.)
This implies  that  the encrypted message $\texttt{Enc}_k(m)$ is (computationally) indistinguishable from a completely random string until at least a time $t_{comp}$. Furthermore, the security  against adaptive chosen-ciphertext attacks ensures another required property: \emph{non-malleability} \cite{katzIntroductionModernCryptography2014}. In simple terms, an encryption scheme is called non-malleable if one cannot feasibly manipulate a given ciphertext in such a way that it produces another ciphertext, which, when decrypted, yields a plaintext related to the original.
Finally, the desired security for the hybrid key distribution protocol is based on the \emph{trace distance criterion} \cite{rennerSecurityQuantumKey2008}, a standard criterion to prove information-theoretic security for  quantum key distribution.

%\begin{definition}[$\epsilon$-secure key]
%%A $\epsilon)
%%$-secure protocol is a quantum key distribution protocol between Alice and Bob where they pre-share a secret $S$ using a $t_{comp}$-secure encryption scheme.  
%Let us consider  the $ccq$-state $\rho_{K_AK_BE} $ as the output of the  quantum key distribution protocol, where $E$ is the quantum system of  the adversary, $K_A$ and $K_B$ are Alice's and Bob's final keys respectively and $\mathcal{K}$ is the set  of possible keys. An  $\epsilon$-secure key  must then satisfy the following properties:\\[-0.4cm]
%\begin{itemize}
%\item \textbf{$\epsilon_{cor}$-correct}: \,\,\,$Pr(K_A \neq K_B) \le \epsilon_{cor}$
%\item \textbf{$\epsilon_{sec}$-secrect}:\,\,\, $\frac{1-p^{\perp}}{2} \lVert \rho_{K_AE} - \frac{1}{|\mathcal{K}|} \sum_{k \in \mathcal{K}} \ket{k}\!\bra{k} \otimes \rho_E \rVert_1 \le \epsilon_{sec}$,
%\end{itemize}
% where $p^{\perp}$ is the probability that the protocol aborts and the final security parameter is simply $\epsilon = \epsilon_{cor}+ \epsilon_{sec}$.
%\end{definition}
%\noindent
%These two conditions guarantee that the probability of Alice and Bob not sharing the same key at the end of the protocol is extremely low, while at the same time a malicious eavesdropper cannot distinguish the final key from a completely random string.
%

\subsection{Protocol description}
\label{sec:protocol_desc}
Now that we have introduced all the crucial ingredients, we can
present and analyze our protocol.
The main building block for our construction is an explicit quantum communication protocol that solves the $\beta PM$ problem by simply sending a constant number of $n$-dimensional quantum states \cite{gavinskyExponentialSeparationsOneway2007}.

\subsubsection{\texorpdfstring{$\beta PM$}{TEXT} quantum protocol}
Alice sends a uniform superposition of her bits to Bob:
 \begin{equation}
 \label{eq:psi_x}
 \ket{\psi_x} = \frac{1}{\sqrt{n}} \sum_{i=1}^n (-1)^{x_i} \ket{i}\;.
 \end{equation}
  Bob completes his $\beta n$ edges to a perfect matching in an arbitrary way and measures with the corresponding set of $n/2$ rank $2$ projectors, where for an edge $(a,b)$ the projector is $P=\ketbra{a} + \ketbra{b}$. With probability $2\beta$ he will receive an output corresponding to one of the edges $(i_l, j_l)$ from his input $\beta$-matching $M$. The state then collapses to $\frac{1}{\sqrt{2}} ((-1)^{x_{i_{l}}} \ket{i_l} + (-1)^{x_{j_l}} \ket{j_l},$
 from which Bob can obtain the bit $v_l = x_{i_l} \oplus x_{j_l}$  using a measurement containing projectors $\{\ket{+}\!\bra{+}, \ket{-}\!\bra{-}\}$, where $\ket{+} = (\ket{i_l} + \ket{j_l})/\sqrt{2}$ and $\ket{-} = (\ket{i_l}- \ket{j_l})/\sqrt{2}$, and immediately retrieve the bit $a$. With probability $1-2\beta$,  instead, he will receive an output that doesn't correspond to any edge of the $\beta$-matching $M$: in this case, he immediately outputs $b=\perp$, aborting the protocol.
 %By repeating a constant number of times they can achieve correctness $\-\epsilon_{cor}$ for any small constant $\epsilon_{cor}$.
 One important point is that Bob can perform his measurement with only two single photon detectors, since he can pre-route, in accordance with $(M; \omega)$ the output of the $n$ beamsplitters. See Figure  \ref{fig:implementation} for a pictorial representation.

In practice the quantum channel  and detectors will be subject to loss and errors.  What Alice and Bob can implement is a practical version of the $\beta PM$ protocol, described in detail in Appendix \ref{appendix:practical_prot}, where they compensate for the loss by sending several copies of the same state $\ket{\psi_x}$.

 \begin{figure}[htb!]
\centering
\includegraphics[width=0.75\textwidth]{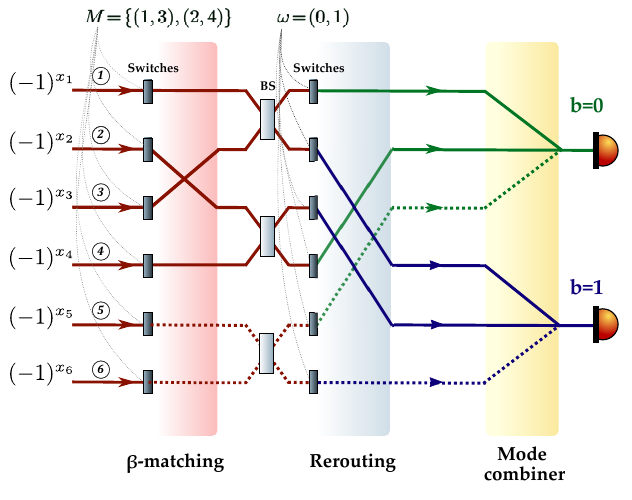}  
     \caption{Illustration of a possible implementation of Bob's decoding with $n=6$ spatial modes and $\beta = 1/3$.  In the $\beta$-matching part Bob uses his knowledge of $M$ to control each switch and direct each mode to the corresponding  beam splitter (BS). The modes with dotted lines  are blocked instead, since they don't correspond to any vertex of the partial matching. Then, in the rerouting part, he  reorders the modes based on $\omega$. Finally, thanks to a mode combiner, he directs  the first (second) half of the modes to the first (second) detector.}
\label{fig:implementation}
\end{figure}

\subsubsection{HM-QCT key distribution scheme}
 Now that we have described the main building block, we are ready to present our hybrid key distribution protocol.\\
\rule{\textwidth}{0.2mm}\\[0.1cm]
HM-QCT Protocol\\[-0.1cm]
\rule{\textwidth}{0.2mm}
\\[-0.7cm]
\begin{enumerate}
\setlength\itemsep{-0.4em}
\item[] \textbf{Parameters:}
\\[-0.65cm]
\begin{itemize}
\setlength\itemsep{-0.2em}
\item[-] dimension $n$ of the problem
%\item[-] $t$ :time-lock of the encryption scheme
%\item[-] $\delta$: parameter of Eve's quantum memory
\item[-] number of copies $m$
\item [-] number of rounds $l$.\\[-0.3cm]
\end{itemize}
\item \textbf{Data Generation:} 
\\[-0.65cm]
\begin{itemize}
\setlength\itemsep{-0.3em}
\item[-]  Alice generates and stores $\vec{x}= \left(x_1,...x_{l}\right)$ and $\vec{y}= \left(y_1,...y_{l}\right)$  from the probability  distribution $\mu^l \in \Delta^l(\mathcal{X} \times \mathcal{Y})$. She then computes and stores the string $\vec{a} = (a_1,... ,a_l)$, where $a_j= \beta PM(x_j,y_j)$.\\[-0.3cm]
\end{itemize} 

\item \textbf{QCT exchange }
\\[-0.65cm]
\begin{itemize}
\setlength\itemsep{-0.2em}
\item[-] Alice and Bob run $\texttt{Gen}$ and obtain a shared secret $k$.
\item[-] Alice sends $\texttt{Enc}_k(\vec{y})$ to Bob.
\item[-] Bob decrypts $\texttt{Enc}_k(\vec{y})$ using $\texttt{Dec}_k$,  obtaining $\vec{y}$.\\[-0.3cm]

\end{itemize} 
\item \textbf{Quantum communication} 
\\[-0.65cm]
\begin{itemize}
\setlength\itemsep{-0.3em}
\item for $i=1; \, i\le l; \, i++$ 
\begin{itemize}
\setlength\itemsep{-0.3em}
\item Alice and Bob run the $\beta PM$ quantum protocol, with input $x_i$ and $y_i$. Bob stores the output $b_i$.
\end{itemize} 
\end{itemize} 
%\item \textbf{Sifting:}
%\begin{itemize}
%\item[-] Alice and Bob discard all rounds with ${b}_i =\perp$. Let $l' = F \cdot l$ be the number of remaining rounds (re-indexed as $1,...,l'$ ).
%\end{itemize}
\item \textbf{Sifting:}
\\[-0.65cm]
\begin{itemize}
\item[-] Alice and Bob discard all rounds with ${b}_i =\perp$.
%Let $l' = \freq \cdot l$ be the number of remaining rounds (re-indexed as $1,...,l'$), where $\freq$  is the probability  that a round of the protocol is conclusive.
\\[-0.3cm]
\end{itemize}

\item \textbf{Classical post processing:}
\\[-0.65cm]
\begin{itemize}
\item[-] Parameter estimation: Alice and Bob estimate
the quantum bit error rate (QBER) i.e.\ the error rate of a conclusive round, by revealing a part
of their string. 
%They abort the protocol if the QBER exceeds some threshold error probability $\epsilon_{th}$ that depends on the number $m$  of copies sent.
\item[-] Alice and Bob perform \textit{error correction} \cite{brassardSecretKeyReconciliationPublic1994} followed by \textit{privacy amplification} \cite{bennettGeneralizedPrivacyAmplification1995}  to distill
a secret key.\\[-0.8cm]
\end{itemize}

\end{enumerate}
\rule{\textwidth}{0.2mm}
%\\[0.2cm]
%\noindent
\noindent
\begin{remark}
The correctness of our protocol is ensured by the correctness of the $\beta PM$ protocol together with an extra step of error correction to deal with noise and loss present in practical scenarios. 
\end{remark}

\subsection{Security Analysis}
\label{Sec:security_analysis}
\subsubsection{Achievable key rate in the i.i.d. setting}

We now focus on how to derive an achievable key rate within our model. In this article we shall analyze the security of our key distribution protocol in the i.i.d.\ setting, i.e.\ a restricted case where the adversary Eve performs the same strategy independently on every round.  
In this setting, we can consider, without loss of generality, the most general attack from Eve on a single round of the protocol. 
It consists of immediately applying an encoding operation $\mathcal{E}: \L\left((\mathbb{C}^n)^{\otimes m}\right)  \rightarrow \L\left(\H_{\mathcal{Z}} \otimes \H_{Q_{in}} \right)$ statistically independent of $y$ due to the semantic security of the encryption scheme,  before storing the quantum state on her $(t_{comp},\delta)$-noisy quantum memory $ \Phi_{t_{comp}} :\, \L(\H_{Q_{in}}) \rightarrow \, \L(\H_{Q_{out}})$, following a similar strategy of \cite{furrerContinuousvariableProtocolOblivious2018}. Moreover, the non-malleability of the classical encryption scheme prevents Eve from running any homomorphic strategy, i.e.\ a quantum operation depending also on $\texttt{Enc}_k(y)$, which could eventually leak sensitive information. The encoding $\mathcal{E}$ also includes a classical outcome $Z$ that can, for instance, result from measuring part of the copies. 
Moreover, we consider that after the time $t_{comp}$, Eve is given the encrypted secret $y$, i.e.\ that  $\texttt{Enc}$ can be fully decrypted after $t_{comp}$, which is the most favorable case for Eve.

\begin{figure}[htb!]
\vspace{0.2cm}
\centering
\includegraphics[width= 0.85\textwidth]{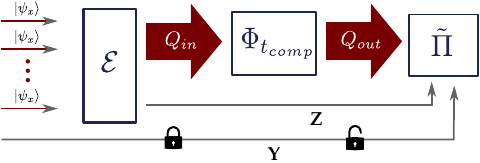}
\caption{General form of an attack of Eve. It consists in an encoding $\mathcal{E}$ that maps (conditioned on some classical outcome $Z$) the $m$ copies of  Alice's quantum state to the memory input $Q_{in}$. At  time $t_{comp}$, when she unlocks the secret $Y$, she decodes the key bit by performing the measurement $\tilde{\Pi}$ on $Q_{out}$ using both the secret $Y$ and  the classical outcome $Z$.}
\label{fig:Eve_general_strategy}
\end{figure}
One should note that this general strategy also includes the extreme strategies where Eve either simply stores the quantum input\footnote{Storing all copies simultaneously and measuring once the encrypted message $y$ is unlocked can be viewed a generalized version of the photon number splitting attack (PNS) \cite{lutkenhausQuantumKeyDistribution2002}. In PNS, eavesdroppers store extra photons in their quantum memory until they obtain the basis information, enabling them to execute the appropriate measurement.} $\ket{\psi_x}^{\otimes m}$, since we have never given any bound on the dimension of our memory, or the case where she measures all the copies immediately. Moreover, any strategy that consists of performing any general measurement at times different from $0$ and $t_{comp}$, even if surely suboptimal, can be described by this general strategy.
As a consequence of this setting, at the end of each round Alice and Bob have access to a realization of correlated classical random
variables $A$ and $B$, respectively, whereas the adversary Eve holds the quantum system $E=YZQ_{out}$.
The final joint state  for each round between Alice and Eve will therefore have the form

\begin{align}
\rho_{AYZ\Phi_{t_{comp}}(Q_{in})} = & \sum_{x,y,a} \mu(x,y)\delta_{\beta PM(x,y),a} \nonumber \\
& \ket{a}\!\bra{a} \otimes \ket{y}\!\bra{y} \otimes (\mathds{1}_{d_\mathcal{Z}} \otimes \Phi_{t_{comp}} )(\mathcal{E}(\rho_x))\;. 
\label{eq:joint_state}
\end{align}
At this point Eve performs the POVM  $\tilde{\Pi}: \{0,1\}\rightarrow \mathcal{P}(\H_{\mathcal{Y}}\otimes \H_{\mathcal{Z}}\otimes \H_{Q_{out}})$ on the output of the quantum memory to guess the bit $a$, making use of $y$  and the classical string $z$. 
Finally, we can lower bound the achievable key rate  under this general i.i.d.\ attack, depicted on Figure \ref{fig:Eve_general_strategy}.
Since the min-entropy lower-bounds the von Neumann entropy, we can lower bound the Devetak-Winter bound  \cite{devetakDevetakWinterDistillation2003}  and obtain the following achievable key rate
\begin{equation}
\label{eq:achievable_keyrate}
R^{\text{\tiny{QCT}}}_{\infty}\ge (1-\abort)\left(H_{min}(A|E) - h(\text{QBER})\right)\;,
\end{equation} 
where  $H_{min}(A|E)=-\log(\Pgen)$, $h$ is the binary Shannon entropy, and $\abort$  is the probability  that a round of the protocol is inconclusive. 
% One can notice that since in the QCT model Eve does not act on Alice's and Bob's quantum channel both $\abort$ and the $\text{QBER}$ are not affected by Eve.
In Appendix \ref{appendix:practical_prot} we have evaluated $\abort$ and $\text{QBER}$ as a function of the number of copies sent $m$ in a practical scenario, considering a  photonic implementation with only two detectors, as well as its variant that operates without post-selection (where $\abort = 0$).

\subsubsection{Bounding \texorpdfstring{$\Pgen$}{TEXT}}
We now  compute a lower bound for the achievable key rate in Eq.\ \eqref{eq:achievable_keyrate}. In particular  here we focus on computing $H_{min}(A|E)$.
We evaluate Eve's guessing probability $\Pgen$ in two steps. First we want to bound it with respect to a restricted strategy where she never uses a noisy quantum memory, but she performs immediately a joint measurement on the $m$ copies. We call such a strategy an \emph{immediate measurement strategy}.
The second step consists instead in deriving a bound on the guessing probability of this restricted strategy by exploiting the communication complexity gap between quantum and classical strategies for the $\beta PM$ protocol.
\subsubsection{Reduction to immediate measurement}
In this restricted scenario, following a standard post-measurement information strategy \cite{gopalUsingPostmeasurementInformation2010}, Eve performs an immediate measurement $\mathcal{Z} : $ $\L\left((\mathbb{C}^n)^{\otimes m}\right)  \rightarrow  \L\left(\H_{\mathcal{Z}}\right)$  on the input state $ \rho_x \coloneqq(\ket{\psi_x}\!\bra{\psi_x})^{\otimes m} $ and obtains a classical outcome  $z$.  At time $t_{comp}$, she unlocks $y$ and extracts the final guess by performing a classical decoding $\tilde{\Pi}_1$, that can be expressed as a POVM $\tilde{\Pi}_1: \{0,1\} \rightarrow \mathcal{P}( \H_{\mathcal{Y}} \otimes \H_{\mathcal{Z}})$.
The guessing probability can therefore be written as
\begin{equation}
 \Pim \coloneqq \max_{\tilde{\Pi}_1}\sum_{x,y,a} \mu(x,y)\delta_{\beta PM(x,y),a}\Tr [ \tilde{\Pi}_1(a)(\ketbra{y} \otimes \mathcal{Z}(\rho_x))]\;.
 \label{eq:P_guess_imm}
\end{equation}

 To  show the security reduction we first  prove the following useful (and more general) theorem.

\begin{theorem}
\label{theo:boundP_guess}
If $\lVert \Phi -  \mathcal{F} \rVert_{\diamond} \le \delta $, with  $\mathcal{F}$ being the completely mixing channel, then for any cqq-state $\rho_{AXQ}$ we have
\begin{equation}
P_{guess}(A|X\Phi(Q)) \le P_{guess} (A|X) + \delta\;.
\end{equation}
\end{theorem}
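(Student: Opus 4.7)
The plan is to split the bound into two independent reductions: first replace $\Phi$ by the completely mixing channel $\mathcal{F}$ at the cost of $\delta$, and then observe that $\mathcal{F}(Q)$ carries no information about anything because it is a fixed state. In symbols, I would prove
\begin{equation*}
P_{\text{guess}}(A|X\Phi(Q)) \le P_{\text{guess}}(A|X\mathcal{F}(Q)) + \delta \quad\text{and}\quad P_{\text{guess}}(A|X\mathcal{F}(Q)) = P_{\text{guess}}(A|X),
\end{equation*}
and chain them.

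For the first inequality, I would invoke the standard fact that for any two cq-states $\rho_{AE}$ and $\sigma_{AE}$ sharing the same marginal $P_A$, one has $|P_{\text{guess}}(A|E)_\rho - P_{\text{guess}}(A|E)_\sigma| \le \lVert \rho_{AE} - \sigma_{AE}\rVert_1$. This follows from plugging the optimal POVM for one state into the other and bounding by Hölder. Applying this to $\rho_{AX\Phi(Q)}$ and $\rho_{AX\mathcal{F}(Q)} = (\mathds{1}_{AX}\otimes\mathcal{F})(\rho_{AXQ})$ — which share the same marginal $\rho_{AX}$ — the difference is
\begin{equation*}
\lVert (\mathds{1}_{AX} \otimes (\Phi - \mathcal{F}))(\rho_{AXQ})\rVert_1 \le \lVert \Phi - \mathcal{F}\rVert_{\diamond} \le \delta,
\end{equation*}
by definition of the diamond norm (maximization over all input extensions).

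For the equality, $\rho_{AX\mathcal{F}(Q)} = \rho_{AX} \otimes \mathds{1}_{d_{out}}/d_{out}$ factorizes. Given any POVM $\tilde\Pi$ on $\H_X \otimes \H_{\mathcal{F}(Q)}$, the operators $\Pi'(a) := \Tr_{\mathcal{F}(Q)}[\tilde\Pi(a)]/d_{out}$ form a POVM on $\H_X$ yielding the same guessing probability, so $P_{\text{guess}}(A|X\mathcal{F}(Q)) \le P_{\text{guess}}(A|X)$; the reverse inequality holds trivially by tensoring with $\mathds{1}$. This gives equality and closes the chain.

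There is no real obstacle — everything reduces to the diamond-norm definition and the basic stability of the guessing probability under small perturbations of the joint state. The only bookkeeping care is that $X$ is classical, so one must remember that the operator $\sum_a \ket{a}\bra{a}\otimes \tilde\Pi(a)$ used in the Hölder step has operator norm at most $1$, making the trace-distance bound tight enough to match the required $\delta$.
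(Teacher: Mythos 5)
Your proof is correct and follows essentially the same route as the paper: both arguments replace $\Phi$ by $\mathcal{F}$ at a cost of $\delta$ via H\"older's inequality combined with the diamond-norm bound (you package this as a trace-distance stability statement for $P_{\text{guess}}$ applied to the joint cq-state, the paper applies H\"older per value of $a$, which is the same computation), and both then observe that the completely mixing channel's output is uncorrelated with $AX$ so the residual guessing probability equals $P_{\text{guess}}(A|X)$. No gaps; your explicit partial-trace construction of the reduced POVM is just a more careful version of the paper's closing remark that $\mathcal{F}$ destroys all information in $Q$.
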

\begin{proof}
We can bound the guessing probability as follows
\begin{align*}
P_{guess}(A|X\Phi(Q))=& \max_{\Pi} \sum_a p(a) \Tr[\Pi(a)  \Phi (\rho_{XQ})] \\
%=& \max_{\Pi} \sum_a p(a) \left( \Tr[\Pi(a) (\Phi \!-\!\mathcal{F})(\rho_{XQ})] + \Tr[\Pi(a) \mathcal{F}(\rho_{XQ})]\right)\\
\le & \max_{\Pi} \sum_a \! p(a) \left(\lVert \Pi(a)\rVert_{\infty} \!\lVert (\Phi \!-\!\mathcal{F})(\rho_{XQ})\rVert_1 \!
+ \!\Tr[\Pi(a) \!\mathcal{F}(\rho_{XQ})]\!\right)\\
\le& \delta +  \max_{\Pi} \sum_a p(a)\Tr[\Pi(a) \mathcal{F}(\rho_{XQ})]\;,
\end{align*}
where we used the notation $\mathcal{N} (\rho_{XQ}) \coloneqq (\mathds{1}_{d_{\mathcal{X}}}\otimes \mathcal{N}) (\rho_{XQ})$ for  any quantum channel $\mathcal{N}$ acting only on $Q$. In the second line we used the Hölder's inequality,  while the last inequality is obtained by noticing that $\lVert M \rVert_{\infty} \le 1$ for any element of a POVM, the fact that $\sum_a p(a) = 1$, and the fact that  $\lVert (\Phi-\mathcal{F})(\rho_{AX})\rVert_1 \le\delta $, since we have  $\lVert \Phi -  \mathcal{F} \rVert_{\diamond} \le \delta $.
Finally, since $\mathcal{F}$ destroys all the quantum information in the system $Q$, we directly have $ \max_{\Pi} \sum_a p(a)\Tr[\Pi(a) \mathcal{F}(\rho_{XQ})] = P_{guess} (A|X)$ which concludes the proof. 
\end{proof}
Now from Theorem \ref{theo:boundP_guess} we simply have that for any encoding attack
\begin{align}
\Pgen &\le P_{guess}(A|YZ)+ \delta \nonumber \\
&\le \max_{\mathcal{Z}} \Pim + \delta \;, 
\end{align}
where we maximized over all possible Eve's immediate measurements $\mathcal{Z}$.
Hence, considering  $\delta <\! \! <1$, we have successfully reduced any general attack strategy to an immediate joint measurement on the $m$ multiple copies.

\subsubsection{Exploiting the complexity gap}
To finally estimate an upper bound to Eve's guessing probability we still have to study this restricted scenario.  Our approach for a full proof follows the idea that extracting a bit of the key with an immediate measurement strategy is as hard as solving the classical  $\beta PM$ problem.  In particular, Eve cannot do better than what one would get for the $\beta PM$ problem by sending $m \log(n)$ bits of information about the input $x$, where $m \log(n)$ bits is the maximum classical information one can extract from $m$ copies of  a $n$-dimensional quantum state thanks to the Holevo bound.

\begin{lemma}
\label{lemma:P_guess_imm}
$\forall \epsilon \in \left(0,\frac{1}{2}\right)$ if an immediate measurement strategy with $\Pim\ge 1-\epsilon$ exists, then Alice has sent $m$ copies of the quantum state  \eqref{eq:psi_x}, with
\begin{equation*}
m \ge \frac{IC_{\mu}^1(\beta PM, \epsilon)}{\lceil\log(n)\rceil}\;.
\end{equation*}

 \end{lemma}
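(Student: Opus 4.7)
The plan is to interpret an immediate measurement strategy as a classical one-way private-coin protocol for $\beta PM$ and then invoke the Holevo bound to control its external information cost.

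First I would \emph{classicize} Eve's strategy. Suppose $\mathcal{Z}$ is an immediate measurement realized by a POVM $\{M_z\}_{z\in \mathcal{Z}}$ acting on $(\mathbb{C}^n)^{\otimes m}$, and $\tilde{\Pi}_1$ is the post-measurement classical decoder that uses $(y,z)$ to output a bit, with overall guessing probability $\Pim \ge 1 - \epsilon$. I define a classical one-way private-coin protocol $\pi$ between Alice (holding $x$) and Bob (holding $y$) as follows: Alice samples $z$ from the conditional distribution $P_{Z\mid X=x}(z) = \Tr[M_z \rho_x]$ using her private randomness $r_A$, sends $z$ as the transcript $\Pi$, and Bob applies the (possibly randomized) classical decoder $\tilde{\Pi}_1$ to $(y,z)$ to output his guess. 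By construction, the probability that Bob's output disagrees with $\beta PM(x,y)$ when $(x,y) \sim \mu$ is exactly $1 - \Pim \le \epsilon$, so $\pi$ is a valid one-way private-coin protocol computing $\beta PM$ with error at most $\epsilon$ under $\mu$.

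Next I would bound the external information cost $IC^1_\mu(\pi) = I(\Pi:X) = I(Z:X)$. The joint state of $X$ and the quantum system on which $\mathcal{Z}$ acts is the cq-state $\sum_x \mu_X(x)\, \ketbra{x} \otimes \rho_x$ with $\rho_x = (\ketbra{\psi_x})^{\otimes m}$ living on a Hilbert space of dimension $n^m$. Since $Z$ is obtained by a POVM measurement on that quantum register, the Holevo bound gives
\begin{equation*}
I(Z:X) \;\le\; \chi\bigl(\{\mu_X(x),\rho_x\}\bigr) \;\le\; \log\bigl(n^m\bigr) \;=\; m \log(n) \;\le\; m \lceil \log(n) \rceil \;.
\end{equation*}

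Finally, by Definition \ref{definition:info_complex}, $IC^1_\mu(\beta PM, \epsilon)$ is the infimum of $IC^1_\mu(\pi')$ over all one-way private-coin protocols $\pi'$ that solve $\beta PM$ under $\mu$ with error at most $\epsilon$. Since the protocol $\pi$ built above is such a protocol, we obtain
\begin{equation*}
IC^1_\mu(\beta PM, \epsilon) \;\le\; IC^1_\mu(\pi) \;=\; I(Z:X) \;\le\; m\lceil \log(n) \rceil \;,
\end{equation*}
which rearranges to the claim.

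The only non-routine step is the first one: I must check that $\pi$ really qualifies as a one-way private-coin classical protocol in the sense of Definition \ref{definition:info_complex}, in particular that the randomness used to simulate the quantum measurement can be absorbed into Alice's private coins $r_A$ and that Bob's classical decoder $\tilde{\Pi}_1$ fits the protocol template (it can, possibly using $r_B$ to realize the randomization of the POVM $\tilde{\Pi}_1$). Beyond that, the argument is just Holevo plus the definition of information complexity, so I do not anticipate any genuine obstacle.
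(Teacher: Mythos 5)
Your proposal is correct and follows essentially the same route as the paper's own proof: convert the immediate measurement into a one-way classical protocol whose transcript is the measurement outcome $z$, bound $I(X:Z)\le m\lceil\log(n)\rceil$ via the Holevo bound (the $m$ copies live in a space of dimension $n^m$), and conclude from the definition of $IC^1_\mu(\beta PM,\epsilon)$ as an infimum. Your extra care in checking that the sampling of $z$ and the decoder's randomization fit the private-coin template is a welcome (if routine) tightening of the paper's more informal argument.
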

\begin{proof}

Let's suppose there exists an immediate measurement strategy with \\$\Pim $ at most $1-\epsilon$, then we can transform this strategy into a classical protocol  to solve the  $\beta PM$ problem.
The transformation is straightforward, Alice generates  $m$ copies of the quantum state  \eqref{eq:psi_x}, then she immediately performs the measurement $\mathcal{Z}$ and sends the classical output $z$ to Bob who, after performing the final POVM $\tilde{\Pi}_1$  on $z$ and $y$, will output  the correct answer with probability at least $1-\epsilon$.   Note that  the string $z$ is the transcript of the protocol.
Since from Holevo's bound we know that $I(X; Z) \le m\lceil\log(n)\rceil$, by definition of $IC_\mu^1(\beta PM, \epsilon)$ we have
\begin{equation*}
m \ge \frac{IC_{\mu}^1(\beta PM, \epsilon)}{\lceil\log(n)\rceil}
\end{equation*}
that concludes the proof. 
\end{proof}

Finally,  thanks to the complexity gap between classical and quantum strategies, Theorem \ref{theo:P_guess_fin} ensures that Eve's guessing probability is safely bounded far from $1$  as  long as Alice is sending  $\mathcal{O}\left( \frac{\sqrt{n}}{\log(n)}\right)$ copies of the quantum state.

\begin{theorem}
 \label{theo:P_guess_fin}
Let us suppose $n \ge 4$. 
For any encoding attack Eve's guessing probability is bounded by 
\begin{equation}
\Pgen \le  \frac{1}{2} +2\left(\sqrt[3]{-q} + \sqrt{\frac{p}{3}} \right)+ \delta ,
\end{equation}
with
\begin{align*}
q & = 
\frac {-100}{\sqrt{n}} e \sqrt{\beta} ((m+1 )\lceil\log(n)\rceil +\frac{1}{\ln(2)}+ 6)\\
p & = \frac {-100}{\sqrt{n}} e\sqrt{\beta}\left(\log\left(\frac{5}{2}\right)- \frac{1}{\ln(2)}\right)\;.
\end{align*}
\end{theorem}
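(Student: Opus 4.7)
The plan is to reduce the bound on $P_{guess}$ to a cubic inequality by chaining the four previous results. First, I would apply Theorem \ref{theo:boundP_guess} with the classical system taken to be $(Y, Z)$ and the quantum system $Q = Q_{in}$, yielding
\[
P_{guess}(A\mid YZ\Phi_{t_{comp}}(Q_{in})) \le P_{guess}(A\mid YZ) + \delta \le \max_{\mathcal{Z}} P_{guess}(A\mid Y\mathcal{Z}(Q_{in})) + \delta,
\]
where $\mathcal{Z}$ ranges over immediate measurements producing the classical output $Z$. This reduces the analysis to bounding the guessing probability of an immediate-measurement strategy, up to the additive term $\delta$.

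Next, set $\tilde P := \max_{\mathcal Z} P_{guess}(A\mid Y\mathcal Z(Q))$ and $\epsilon := 1 - \tilde P$. When $\tilde P > 1/2$, one can chain Lemma \ref{lemma:P_guess_imm} (from $\tilde P$ to one-way information complexity), Lemma \ref{lem:ICtoDC} (from information to distributional complexity at the cost of an auxiliary parameter $\delta_2$), and Theorem \ref{theo:exponential_sep} (the classical hardness lower bound for $\beta PM$). Substituting $\gamma = 1/(8e)$ and the explicit forms of $k$ and $d$ gives
\[
m \lceil \log(n)\rceil + 6 \;\ge\; \frac{\delta_2}{2}\left[\frac{\sqrt n}{50 e \sqrt\beta}\bigl(\tfrac{1}{2} - \epsilon - \delta_2\bigr)^2 + 2\log\bigl(\tfrac{1}{2}-\epsilon-\delta_2\bigr) - 2\log(5/2)\right].
\]

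The final step is to recast this as a cubic inequality in the variable $W := (\tilde P - 1/2)/2$. I would choose $\delta_2$ as a specific multiple of $W$ so that $\delta_2 (1/2 - \epsilon - \delta_2)^2$ scales cleanly as $W^3$, while the logarithmic contribution $\delta_2 \log(1/2 - \epsilon - \delta_2)$ is absorbed using a bound of the form $-2W \log W \le \ln 4$ valid on the relevant interval (this is what produces the $\ln 4$ appearing in both $p$ and $q$). After this bookkeeping, the inequality collapses to the canonical form $W^3 \le -q + (p/3)\, W$, and I would conclude with the elementary fact that for $A,B,W \ge 0$, $W^3 \le A + BW$ implies $W \le \sqrt[3]{A} + \sqrt{B}$ — verified by plugging $W = \sqrt[3]{A} + \sqrt{B}$ into $W^3 - (A + BW)$ and noting that the remainder $3 A^{2/3}\sqrt{B} + 2 A^{1/3} B$ is strictly positive. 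Applied with $A = -q$ and $B = p/3$ this gives $\tilde P \le 1/2 + 2(\sqrt[3]{-q} + \sqrt{p/3})$, and combining with the first step proves the theorem. The main obstacle is precisely the choice of $\delta_2$: it must be tuned so that the cubic term surfaces with the correct prefactor and the logarithmic correction fits into the constants without loosening the leading bound.
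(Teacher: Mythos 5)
Your proposal reproduces the paper's proof essentially step for step: the same reduction chain (Theorem~\ref{theo:boundP_guess} to pass to an immediate measurement, then Lemma~\ref{lemma:P_guess_imm}, Lemma~\ref{lem:ICtoDC} with the auxiliary parameter $\delta_2$ set to half the bias, and Theorem~\ref{theo:exponential_sep}), the same absorption of the $2W\log W$ term into the $\ln 4$ constants, and the same cubic in $W=(\tilde P-\tfrac12)/2$. The only divergence is in the last algebraic step, where the paper extracts the root via Cardano's formula followed by subadditivity of $\sqrt[3]{\cdot}$, while you verify directly that $W^3\le -q+(p/3)\,W$ forces $W\le \sqrt[3]{-q}+\sqrt{p/3}$ --- an equally valid and arguably cleaner way to land on the stated constants.
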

\begin{proof}
We first prove an useful lemma
\begin{lemma}
 \label{lemma:P_guess_fin}
$\forall \epsilon \in \left(0,\frac{1}{2}\right)$, $\forall \delta_2 \in 
\left( 0, \frac{1}{2} - \epsilon\right)$ if an encoding attack with \\$\Pgen \ge 1-\epsilon +
\delta$ exists, then Alice must have sent $m$ copies of the quantum state \eqref{eq:psi_x}, with 

\begin{equation}
\label{eq:copies-bound}
 m \ge \frac{\frac{\delta_2}{2}\left( \frac{1}{50e\sqrt{\beta}} \left(\frac{1}{2} - \epsilon
 -\delta_2 \right)^2 \sqrt{n} +2\log\left( \frac{1}{2} - \epsilon - \delta_2\right) -2\log\left(\frac{5}{2}\right)\right) - 6}{
\lceil\log(n)\rceil} \; .
\end{equation}
\end{lemma}

\begin{proof}
Let $\epsilon \in \left(0,\frac{1}{2}\right)$, $\delta_2 \in 
\left( 0, \frac{1}{2} - \epsilon\right)$. Let us suppose there exists an encoding attack with 
$\Pgen\ge 1-\epsilon +
\delta$. First, by using Theorem \ref{theo:boundP_guess}, we deduce
$ \max_{\mathcal{Z}} \Pim\ge 1 - \epsilon$.
Then we use Lemma \ref{lemma:P_guess_imm} to deduce
$m \ge \frac{IC_{\mu}^1(\beta PM, \epsilon)}{\lceil\log(n)\rceil}.$
Furthermore, from Lemma \ref{lem:ICtoDC} we obtain
$m \ge \frac{\frac{\delta_2}{2} D^1_{\mu}(f,\epsilon + \delta_2) - 6}{
\lceil\log(n)\rceil}.$
Finally,  we conclude the proof by showing that from Theorem \ref{theo:exponential_sep} we have

\begin{equation*}
 m  \ge \frac{\frac{\delta_2}{2}\left( k(\epsilon + \delta_2) \sqrt{n} + d(\epsilon + \delta_2)\right)- 6}{
\lceil\log(n)\rceil}\;,
\end{equation*}
with $k$ and $d$ defined in \eqref{eq:k_and_d}.

\end{proof}

\noindent
Now we are ready to prove Theorem \ref{theo:P_guess_fin}.  Let $x=\frac{1}{2} - \epsilon$ and $\delta_2 = \frac{x}{2}$. We can rewrite Eq.\ \eqref{eq:copies-bound} as
\begin{align}
m & \ge \frac{\frac{x}{4}\left( \frac{1}{50e\sqrt{\beta}} \left(\frac{x}{2} \right)^2 \sqrt{n} +2\log\left( \frac{x}{2}\right)-2\log\left(\frac{5}{2}\right)\right)- 6}{
\lceil\log(n)\rceil} \nonumber\\
& \ge \frac{\frac{x}{4}\left( \frac{1}{50e\sqrt{\beta}} \left(\frac{x}{2} \right)^2 \sqrt{n} +\frac{2}{\ln(2)}\left(1- \frac{2}{x}\right)-2\log\left(\frac{5}{2}\right)\right)- 6}{
\lceil\log(n)\rceil} \nonumber\\
& \ge \frac{\frac{1}{100e\sqrt{\beta}} \left(\frac{x}{2} \right)^3 \sqrt{n} +\frac{x}{2}\left(\frac{1}{\ln(2)}-\log\left(\frac{5}{2}\right)\right) -\frac{1}{\ln(2)} - 6}{
\lceil\log(n)\rceil}
\end{align}
where in the second inequality we used the fact that $\ln(t)\ge 1 - \frac{1}{t}$ for $t>0$.
By contraposition, Lemma \ref{lemma:P_guess_fin} implies that for any encoding attack acting on $m$ copies, with 
\begin{equation}
\label{eq:xeqzero1}
m = \frac{\frac{1}{100e\sqrt{\beta}} \left(\frac{x}{2} \right)^3 \sqrt{n} +\frac{x}{2}\left(\frac{1}{\ln(2)}-\log\left(\frac{5}{2}\right)\right) -\frac{1}{\ln(2)} - 6}{
\lceil\log(n)\rceil} -1 \;,
\end{equation}
Eve's guessing probability is bounded by
\begin{equation}
\label{eq:Pgen2}
    \Pgen < \frac{1}{2}+x +
\delta\;.
\end{equation}
We now have to find the real zero of Eq.\ \eqref{eq:xeqzero1} by using Cardan's method.
We first  rewrite  Eq.\ \eqref{eq:xeqzero1} in in the canonical form 
\begin{equation}
z^3 + pz +q=0\;,
\label{eq:xeqzero}
\end{equation}
where 
\begin{align*}
z & =\frac{x}{2}\;,\hspace{0.3cm}
q = 
\frac{-100}{\sqrt{n}}e\sqrt{\beta} 
\left((m+1 )\lceil\log(n)\rceil + \frac{1}{\ln(2)}+6\right)\;,\\
p & = \frac{-100}{\sqrt{n}}e\sqrt{\beta}\left(\log\left(\frac{5}{2}\right)- \frac{1}{\ln(2)}\right)\;.
\end{align*}
\noindent
We now observe that $q<0$ and, since $\left(\log\left(\frac{5}{2}\right) - \frac{1}{\ln(2)}\right) <0,$ it follows that
$p>0$. Consequently, this implies that $\Delta \coloneqq -( 4p^3 + 27q^2)$ is negative.
Therefore, thanks to Cardan's method, the zero of Eq.\ \eqref{eq:xeqzero}  expressed in the variable $x$ is
 \begin{align}
x& = 2^{1- \frac{1}{3}}\left(\sqrt[3]{-q + \sqrt{\frac{-\Delta}{27}}} + 
\sqrt[3]{-q - \sqrt{\frac{-\Delta}{27}}}\right)
\nonumber \\
& \le 2^{1- \frac{1}{3}}\left(\sqrt[3]{-q + \sqrt{\frac{-\Delta}{27}}}\right) \nonumber \\
& \le 2 \left( \sqrt[3]{-q} + \sqrt{\frac{p}{3}} \right)\;.
\label{eq:zero_x}
 \end{align}
Where in the second line we used the fact that the second term with $\sqrt[3]{\cdot}$ was negative. The last line was obtained using the fact that  $\sqrt[d]{\cdot}$ is subadditive for any integer $d$.
Finally, we can rewrite Eq.\ \eqref{eq:Pgen2} as
\begin{equation}
\Pgen   \le \frac{1}{2} +2\left( \sqrt[3]{-q} + \sqrt{\frac{p}{3}} \right)+ \delta\;.
\end{equation}
\end{proof}

\noindent
From Theorem \ref{theo:P_guess_fin} we can now rewrite the achievable key rate in \eqref{eq:achievable_keyrate} as
\begin{equation}
\label{eq:achievable_keyrate_final}
    R^{\text{\tiny{QCT}}}_{\infty}  \ge  (1-\abort)\left(-\log \left(\frac{1}{2} +2\left( \sqrt[3]{-q} + \sqrt{\frac{p}{3}} \right)+ \delta \right) - h(\text{QBER})\right)\;.
\end{equation}

\subsubsection{Everlasting secure key expansion}

The security analysis shows that, within the QCT model, we can simplify the scenario to one where Eve's interaction (measurement) on the quantum state occurs right at the beginning, at $t=0$. 
The security analysis after $t_{comp}$, then purely relies  on information-theory principles.
Hence the resulting key rates are valid against an adversary with unbounded computational power after $t_{comp}$, i.e.\  our schemes have everlasting security \cite{unruhEverlastingMultipartyComputation2013}.
We note that everlasting secure key establishment cannot be attained  with cryptographic protocols relying solely on classical communication, even with computational assumptions. Classical communication can be copied, making harvesting attacks (store now, attack later) a significant vulnerability.

Furthermore, to ensure the effectiveness of our hybrid key distribution scheme, the rate of secure key generation must exceed the rate of key consumption due to the need for a pre-shared key. One way to achieve this is by employing a block cipher in the QCT exchange described in Section \ref{sec:protocol_desc}, where Alice divides the message $\vec{y}$ into fixed-size blocks. As a block cipher can encrypt an exponential number of blocks in the key size, the rate of pre-shared key consumption grows logarithmically with the number of protocol rounds, while the final key size increases linearly, ensuring secure key expansion.

\subsection{Performance analysis and trust assumptions}\label{sec:upperbound}
\subsubsection{Key rate analysis}
While most articles on communication complexity focus on asymptotic scalings of input dimensions, our work necessitates precise knowledge of the lower bounds of communication complexity in the non-asymptotic regime. 
\begin{figure}[htb!]
         \centering
         \includegraphics[width=\textwidth]{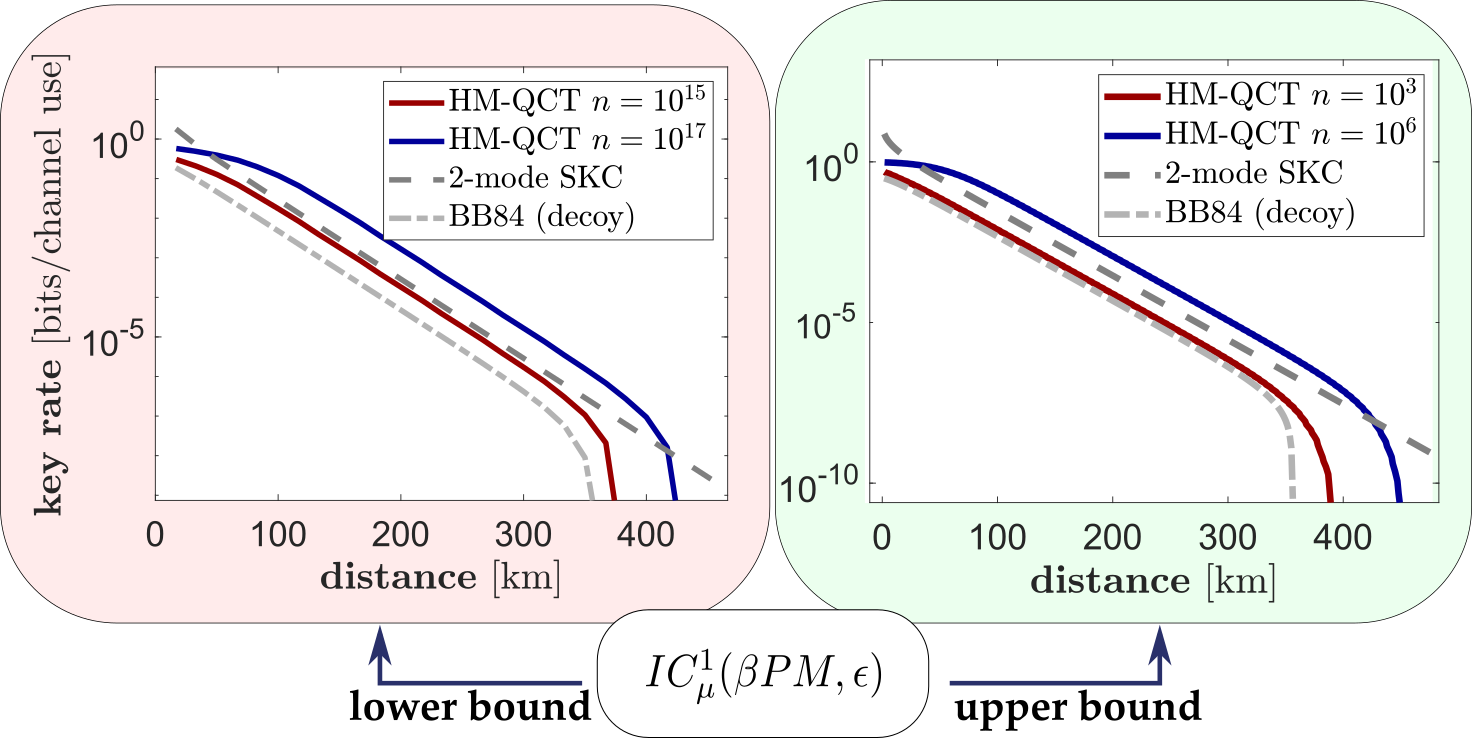}
     \caption[Key rate comparison for the HM-QCT protocol.] {Key rate comparison for the HM-QCT protocol between the derivation from the best-known upper bound and the lower bound of the one-way information complexity of the $\beta PM$ problem,    both evaluated with $\delta= 10^{-4}$ and $\beta = \frac{1}{4}$. We benchmark them with the BB84 protocol with decoy states \cite{loDecoyStateQuantum2005} and the 2-mode  Secret Key Capacity (SKC) \cite{pirandolaFundamentalLimitsRepeaterless2017}. The plots for the HM-QCT protocol are derived under a practical implementation, as detailed in Appendix~\ref{appendix:postsel}. For both the HM-QCT protocol and the BB84 protocol with decoy states, we used the same detector specifications. These detectors are state-of-the-art SNSPDs, as detailed in \cite{liHighrateQuantumKey2023},  characterized by a dark count probability of \( P_{\text{dark}} = 10^{-8} \) and a detection efficiency of $\eta_{\text{det}} = 65\%$.}
\label{fig:keyrate_comparison}
 \end{figure}
With these bounds, we can accurately determine how the guessing probability scales with the dimension and the number of copies sent by Alice, providing a full security proof in an  i.i.d. setting.
Theorem \ref{theo:P_guess_fin} is therefore a significant result, derived from a lower bound of the one-way information complexity of the $\beta\text{PM}$ problem, but this bound may not be tight. In fact, the error $\epsBetaPM(d)$ from the best-known classical protocol with a communication cost $d$ is larger than what one would get from the lower bound.

Nevertheless, one can consider an optimistic scenario where the actual one-way information complexity for any error $\epsilon$ is equal to the information cost of the best-known protocol. This assumes that future developments on finding tighter lower bounds will show that the current best-known classical protocol is the optimal protocol. In particular, in Appendix \ref{appendix:best-known}, we investigate whether alternative approaches to the original best-known protocol exist. This exploration leads to a slightly improved bound, which is presented in Corollary \ref{corol:betapm_communication}.

In this context,  by combining Theorem \ref{theo:boundP_guess} and Lemma \ref{lemma:P_guess_imm} we have 
\begin{equation}
    \Pgen \le 1-\epsBetaPM(m \lceil\log(n)\rceil ) + \delta \;.
\end{equation}

\noindent
Consequently, in Figure \ref{fig:keyrate_comparison} we plot  a comparison between the achievable key rate from \eqref{eq:achievable_keyrate_final} and the key rate obtained from the best-known classical protocol, where in both cases we performed a optimization on the number of copies $m$.

Since our protocol is implemented using two detection modes,  effectively sending at most one bit per channel use, we benchmark it with two standard key rate limits: the BB84  protocol with decoy states \cite{loDecoyStateQuantum2005} and the more general limit for 2-mode optical key distribution \cite{pirandolaFundamentalLimitsRepeaterless2017}, generally called the 2-mode Secret Key Capacity (SKC).
It's evident from the plot that while the lower bound demands an exceedingly high number of modes, around $10^{17}$, to surpass the SKC, the best-known upper bound achieves this with 11 orders of magnitude fewer modes. Moreover, with only a thousand modes, the key rate derived  from the best-known upper bound can already surpass the theoretical limit for BB84. Notably, an experimental implementation of a variant of the  quantum $\beta PM$ protocol has already been performed with a similar number of modes \cite{kumarExperimentalDemonstrationQuantum2019}.

\begin{figure}[htb!]
\centering
\includegraphics[width= 0.65\textwidth]{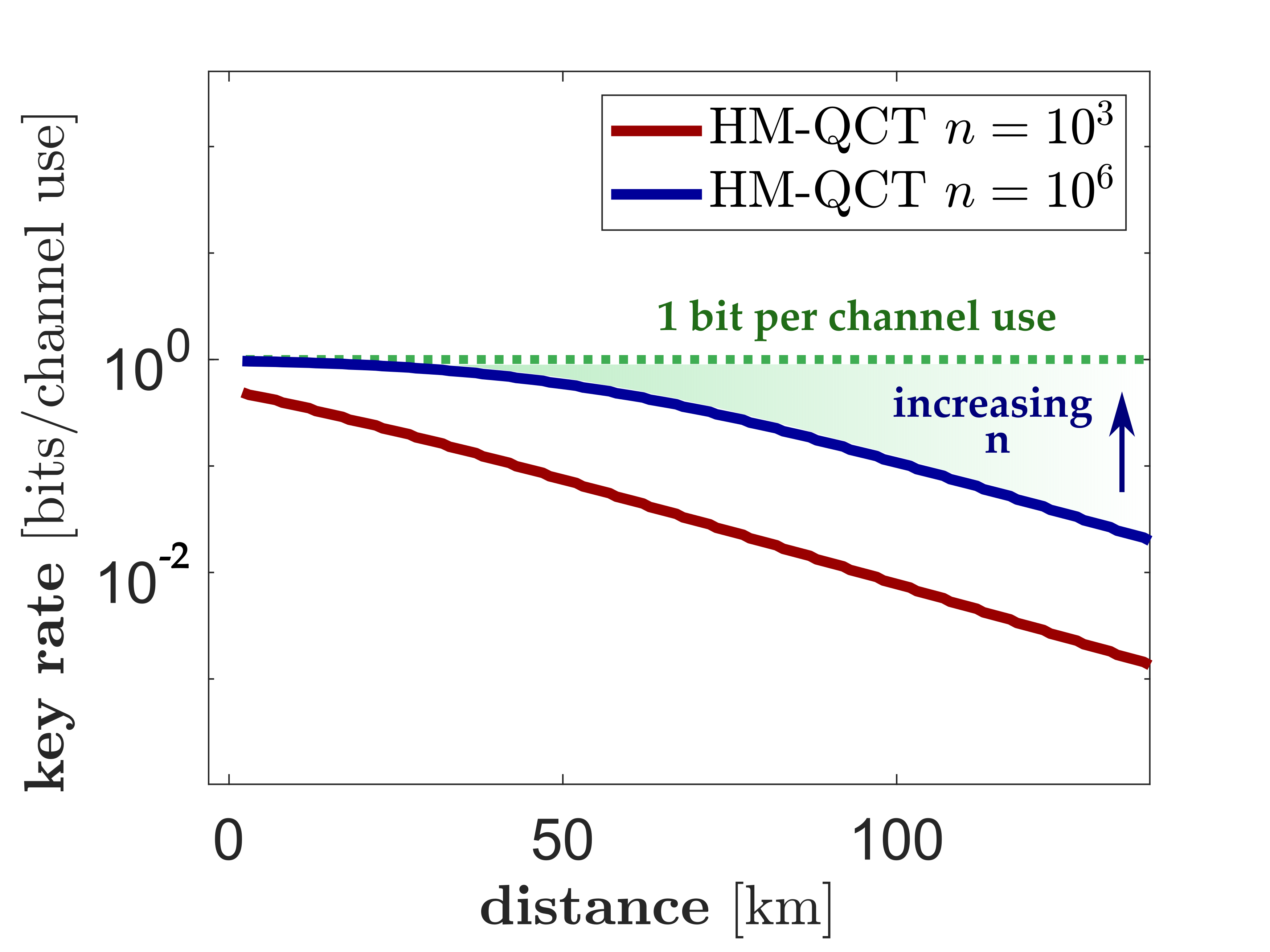}
\caption[Key rate over short distances for the HM-QCT protocol.]{Key rate over short distances for the HM-QCT protocol derived from the best-known upper bound of the one-way information complexity of the $\beta PM$ problem. The specifications of the detectors are the same as those considered in Figure \ref{fig:keyrate_comparison}.
}
\label{fig:keyrate_zoom}
\end{figure}

In general, one can notice that the key rate versus distance behavior of our HM-QCT protocol can be decomposed into essentially three parts:
\begin{itemize}
    \item When focusing on relatively short distances (i.e.\ transmissivity $T \gg \log(n)/\sqrt{n}$), as illustrated in Figure \ref{fig:keyrate_zoom}, a constant rate of 1 secret bit per channel use can be maintained, which corresponds to a classical-like behavior. This is made here possible because the HM-QCT protocol can be operated securely with $\mathcal{O}\left(\sqrt{n}/\log(n)\right)$  input photons.
    
    \item At intermediate distance, the trend of the key rate resembles the 2-mode SKC, decaying exponentially with distance due to Bob receiving, on average, less than one photon per channel use.
    
    \item Close to maximum distance, the probability to detect a signal photon coming from Alice becomes comparable to the detector dark count, provoking the error rate to rise as distance increases and the secret-key rate to drop abruptly.
\end{itemize}

\subsubsection{Semi-device independent approach}
\label{subsec:SDI}

 In our security analysis, one can notice that the way we establish a bound on the min-entropy $H_{\min}(A|Y Z \Phi_{t_{comp}}(Q_{in}))$  only depends on the dimension of the input state prepared by Alice's quantum source. Because of this, it is possible to extend our analysis to the semi-device-independent (SDI) \cite{pawlowskiSemideviceindependentSecurityOneway2011} setting, where we assume that Alice's device generates quantum states of a limited dimension and Bob's measurement device is entirely untrusted. However, the presence of losses in this context may expose quantum cryptographic systems to possible side-channel attacks. 
In a blinding attack \cite{lydersenHackingCommercialQuantum2010}, for example, Eve can exploit the presence of post-selection to remotely control the receiver's device   and execute an intercept-and-resend
attack without raising the QBER.

 \begin{figure}[htb!]
\centering
\includegraphics[width= 0.65\textwidth]{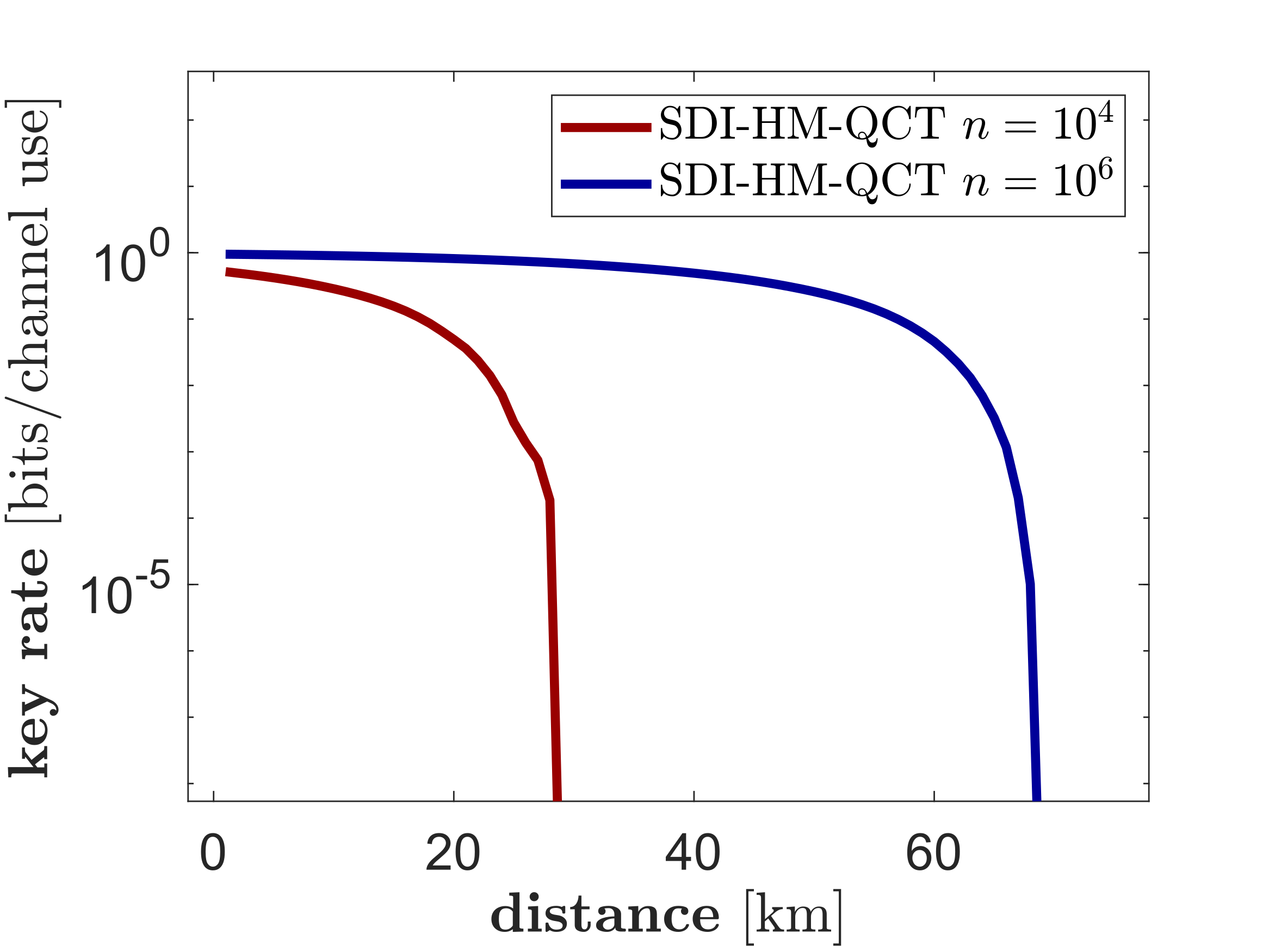}
\caption{Key rate for the SDI-HM-QCT protocol derived from the best-known upper bound of the one-way information complexity of the $\beta PM$ problem. The specifications of the detectors are the same as those considered in Figure \ref{fig:keyrate_comparison}.}
\label{fig:keyrate_SDI}
\end{figure}

Overall, post-selection can cause many problems in the (semi)-device-independent setting \cite{orsucciHowPostselectionAffects2020} and analyses have shown that avoiding these issues places fundamental limits on the achievable key-rates \cite{acinNecessaryDetectionEfficiencies2016}.
Fortunately, due to the high rates achieved by our protocol, we can remove the post-selection and still achieve a positive key rate enabling us to unlock SDI security guarantees.

As shown in Figure \ref{fig:keyrate_SDI}, we can achieve positive key rates  by considering a variant of our protocol called SDI-HM-QCT, which is described in Appendix \ref{appendix:no-postsel}. This variant operates without post-selection, i.e.\ $\abort = 0$, and allows us to derive  positive key rates for short distances from the best-known protocol. 
% Moreover, we believe that one could overcome this short-distance limitation by considering the standard HM-QCT  with post-selection, discussing the critical losses with a general formalism such as the one in \cite{masiniJointmeasurabilityQuantumCommunication2024}.}

\section*{Acknowledgments}
The authors acknowledge financial support from the European Unions’s Horizon Europe research and innovation programme under the project QSNP (Grant No. 101114043) and the PEPR integrated project QCommTestbed (ANR-22-PETQ-0011), which is part of Plan France 2030.
The authors would like to thank the reviewer for identifying an error in the device-independence claims in an earlier version of the manuscript.

\bibliographystyle{quantum}
\bibliography{HM-QCT_bibl1.bib}

\newpage

\appendix
\section{Message compression schemes}
\label{appendix:comp_scheme}
In \cite{harshaCommunicationComplexityCorrelation2010}, the authors demonstrated that it is possible to compress each message of a protocol to approximately its contribution to the external information cost plus some additional constant term.
While the authors focused only on the asymptotic scaling, we carefully derived all the specific constants for the compression scheme. 
\begin{theorem}[Message compression]
\label{theo:messagecomp}
Consider a message $M$ sent by Alice, who holds $X$, and where $M$ is sampled from some conditional probability distribution $P_{M|X}$.
Alice and Bob can use public randomness to simulate\footnote{Instead of sending directly $M$, Alice and Bob can use their shared randomness to decrease the number of bits  Alice has to send, while Bob can still retrieve completely the message $M$.} sending  $M$ by sending an expected number of bits
%s\footnote{Note that the expected communication cost is the average version of our worst-case definition of communication cost, i.e.\ Definition \ref{def:communicationcost}.} 
 upper bounded by $I(M:X) +  1.262  \log(1 + I(M:X)) + 11.6$. The simulation is one round  (i.e. only Alice has to send information) and without error.
\end{theorem}

\begin{proof}
First,  we need  to define a one-way compression scheme to transmit integers in an optimal way.
 \begin{lemma}[Compression scheme]
 \label{lem:send_integer}
Let $z$ be an integer. There exists a one-way protocol that allows Alice to  communicate $z$ to Bob using at most
$\log(z) + 1.262\log( \log(z)) + 6.3$ bits.
\end{lemma}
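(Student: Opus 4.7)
The plan is to realize this as a prefix-free encoding problem: it suffices to construct a prefix-free binary code $C : \mathbb{Z}_{\ge 1} \to \{0,1\}^*$ with codeword lengths $|C(z)| \le \log z + 1.262 \log\log z + 6.3$. Given such a code, the one-way protocol is immediate: Alice transmits $C(z)$, and Bob recovers $z$ uniquely from the prefix-free property, with no need for a delimiter. Thus the entire content of the lemma is really about the existence of a sufficiently compact self-delimiting integer code.

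To produce $C$, I would appeal to Kraft's inequality, reducing the task to exhibiting an integer length function $L(z)$ with $L(z) \le \log z + 1.262 \log\log z + 6.3$ such that $\sum_z 2^{-L(z)} \le 1$. The natural choice is $L(z) := \lceil \log z + 1.262 \log\log z + 6.3 \rceil$ whenever $\log\log z$ is well-defined and positive, together with an ad-hoc assignment of short codewords to the few small values of $z$ (absorbed into the additive slack). With this choice,
\begin{equation*}
\sum_{z \text{ large}} 2^{-L(z)} \;\le\; 2^{-6.3} \sum_{z \ge 2} \frac{1}{z (\log z)^{1.262}},
\end{equation*}
and since the exponent $1.262$ exceeds $1$, the integral test bounds $\sum_{z \ge 2} z^{-1}(\log z)^{-1.262}$ by a modest constant (of order $1/0.262$). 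Combined with the prefactor $2^{-6.3} \approx 0.0127$ and the contribution of the reserved codewords for small $z$, the total Kraft sum falls below $1$.

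The main obstacle is purely bookkeeping: the constants $1.262$ and $6.3$ are tight enough that one has to be careful about (i) the rounding loss when replacing the real-valued length by an integer ceiling, and (ii) the special handling of small $z$ where $\log \log z$ is undefined or negative. Both of these losses must be absorbed into the additive $6.3$, which is why the constant is chosen this way. Once Kraft's inequality has been verified with these specific constants, its converse yields a prefix-free code $C$ with the required lengths, and the one-way protocol is then defined directly by $z \mapsto C(z)$.
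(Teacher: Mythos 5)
Your route is genuinely different from the paper's. The paper gives an explicit two-phase self-delimiting encoding: Alice first transmits the length $\lceil\log z\rceil$ written in base $3$ over the two-bit letters $00,01,10$, terminates that phase with the reserved letter $11$, and then sends the $\lceil\log z\rceil$-bit binary representation of $z$, which Bob can parse because he now knows its length. The two phases cost at most $1.262\log(\log z)+5.3$ and $\log z+1$ bits respectively, giving the claimed bound. You instead argue non-constructively: reduce to building a prefix-free code and verify Kraft's inequality for the length function $\log z+1.262\log\log z+6.3$. Your numerics are right in spirit — the series $\sum_{z\ge 2} z^{-1}(\log z)^{-1.262}$ converges to a constant of order $\ln 2/0.262$ plus the $z=2$ term, and the prefactor $2^{-6.3}\approx 0.013$ leaves enormous slack — so the approach does prove the lemma. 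What each buys: the paper's code is explicit and trivially implementable (it is essentially an Elias-gamma-style construction), which matters when the lemma is plugged into the message-compression machinery of Appendix A; your argument is shorter conceptually and shows the constants are far from tight, but produces only an existence statement.

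One concrete slip to fix: you set $L(z):=\lceil \log z+1.262\log\log z+6.3\rceil$ and assert $L(z)\le \log z+1.262\log\log z+6.3$, but the ceiling of a non-integer strictly exceeds it, so this inequality is false. You need the floor, $L(z):=\lfloor \log z+1.262\log\log z+6.3\rfloor$, which at worst doubles the Kraft sum ($2^{-\lfloor a\rfloor}\le 2\cdot 2^{-a}$) and still leaves it well below $1$ after reserving a short codeword for $z=1$ (where $\log\log z$ is undefined — an edge case the paper's proof also glosses over). With that correction your plan goes through.
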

 \begin{proof}
The protocol consists of two phases. In the first phase Alice sends $y := \lceil \log(z)\rceil$ in base $3$ using the two-bit letters $00, 01, 10$. Alice  then sends the bits $11$ to indicate to Bob that the first phase is complete. In the second phase Alice sends the binary representation of $z$ to Bob. Note that because Bob  knows $\lceil\log(z)\rceil$, he knows when the protocol stops.

%\paragraph*{Complexity analysis}

In the first phase of the protocol Alice sends $\lceil\log_3( \lceil \log(z)\rceil) \rceil$ two-bit letters plus an addition two bits to complete the phase. Thus the total number of bits can be bounded as
\begin{align*}
& 2\lceil\log_3( \lceil \log(z)\rceil) \rceil +2 \le 2 \log_3( \lceil \log(z)\rceil) +4 \\
= & 2 \log_3(2) \log( \lceil \log(z)\rceil) +4 
\le  2 \cdot 0.631 \cdot  \log( \lceil \log(z)\rceil) + 4 \\
= & 1.262 \log( \lceil \log(z)\rceil) + 4 
\le  1.262 \log( \log(z) + 1) + 4 \\
\le & 1.262 \log( \log(z) ) + 5.3\;,
\end{align*}
where in the last inequality we used the fact that $\log(x+1)\le \log(x) +1$ for $x\ge1$.
In the second step Alice  only needs to send $\lceil\log( z)\rceil\le \log(z) + 1$ bits. By combining the upper bounds we obtain the claimed result.
\end{proof}
Then, we can use Claim 7.9 of \cite{raoCommunicationComplexityApplications2020a}, and replace the Claim 7.8 by our Lemma \ref{lem:send_integer} to complete the derivation. 
\end{proof}

Finally,  one can then map the one-way distributional communication complexity to the external information complexity in a one-way setting, exploiting   the message compression in Theorem \ref{theo:messagecomp}, obtaining the result in  Lemma V.3 of \cite{harshaCommunicationComplexityCorrelation2010} with explicit constants.
\mappingtoinfo*

\begin{proof}
Let $f$ be a function, $\mu \in \Delta(\mathcal{X} \times\mathcal{Y})$ be a joint probability distribution over the inputs of $f$, and $\pi$ a one-way  private-coin protocol which computes $f$ with error upper bounded by $\epsilon$ such that
\begin{equation}\label{eq:icinf}
IC_\mu^1 (\pi) \le IC^1_\mu(f,\epsilon) + 0.05\;.
\end{equation}
%We define $\mu_X \in \Delta (\mathcal{X})$ the marginal distribution of $\mu$.
We use Theorem \ref{theo:messagecomp} to deduce a new one-way public-coin protocol  $\pi'$   such that  the average size of the transcript is  upper bounded by
%that uses both a shared randomness (public-coin) and private randomness (private coin). As discussed before, since we are eventually interested in the communication cost of $\pi'$, it can be simulated by a public-coin protocol $\pi''$. 
\begin{equation*}
\mathbb{E} \left( |\Pi'| \right)  \le I(\Pi : X) + 1.262 \log(1 + I(\Pi : X))) + 11.6 \;
\end{equation*}
and the  error probability is at most $\epsilon$.
Then we apply the inequality 
$1.262 \log(1+x) \le x + 0.3$ for any $x \ge 0$ to deduce
\begin{align*}
\mathbb{E} \left( |\Pi'| \right) \le & I(\Pi : X) + 1.262 \log(1 + I(\Pi : X))) + 11.6
 \\
\le & 2 I(\Pi : X) + 11.9 \\
\le & 2IC^1_\mu(f,\epsilon) + 12  \;,
\end{align*}
where in the last inequality we used  \eqref{eq:icinf}.
By using Markov's inequality, we can create a new protocol 
$\pi''$ which is identical to $\pi'$ except when the transcript 
$\Pi''$ has size greater than 
$ \frac{1}{\delta_2}\mathbb{E}\left( |\Pi'| \right)$, then the protocol simply aborts. 
By suitably fixing the public randomness, one can a deterministic protocol which has probability to fail upper bounded by 
$\epsilon + \delta_2$ and a communication cost at most $\frac{2IC^1_\mu(f,\epsilon) + 12 }{\delta_2}$. The lemma then follows from Definition \ref{def:distributional} (see Eq.\ \eqref{eq:defin_distibutional}).

\end{proof}

\section{Practical quantum protocols for the $\beta$PM problem}
\label{appendix:practical_prot}
In this section, we examine two variants of a practical quantum protocol designed to address the $\beta PM$ problem. In a typical quantum protocol, Alice transmits $m$ copies of the quantum state $\ket{\psi_x}$ (as defined in Eq.\ \eqref{eq:psi_x}) to Bob. Bob then performs a measurement based on the ideal protocol, which has three possible outcomes:
\begin{enumerate}
    \item   He aborts the protocol with probability $\abort$ if the measurement result is inconclusive ($b = \perp$).
    \item He outputs $b = a $ with probability $ (1 - \abort)(1 - \text{QBER})$.
    \item He outputs $b \neq a$ with probability $ (1 - \abort) \text{QBER}$.
\end{enumerate}

  \noindent
First, we analyze a specific implementation based on photonics. In this setup, each copy of the quantum state is encoded in a photon across $n$ optical modes, and Bob's decision-making process relies on detecting photons using only two detectors. This implementation corresponds to the standard HM-QCT protocol, where Alice and Bob discard all inconclusive rounds.
Following this, we introduce a variant of this photonic implementation that eliminates inconclusive rounds entirely. This approach is essential to  be protected against side-channel attacks that could exploit the presence of post-selection.

% In this section, we analyze two variants of a practical quantum protocol to solve the $\beta PM$ problem.  In a general quantum protocol, Alice sends $m$ copies of the quantum state $\ket{\psi_x}$ in Eq.\ \eqref{eq:psi_x} to Bob. Bob performs a measurement from the ideal protocol where there are three possible outcomes: he aborts the protocol with probability $\abort$ if the measurement result is inconclusive ($b = \perp$); otherwise, he outputs $b = a$ with probability $(1 - \abort)(1 - \text{QBER})$ or $b \neq a$ with probability $(1 - \abort) \text{QBER}$.

% First, we analyze a specific implementation based on photonics, where each copy of the quantum state is encoded in a photon with $n$ optical modes, and where Bob's outcome decision-making process relies on detecting photons using only two detectors. This first implementation is used for the standard HM-QCT protocol, where Alice and Bob discard all the inconclusive rounds. 

% Then we present a variant of this implementation, where there is never an inconclusive round, crucial to be protected against side-channel attacks exploiting the presence of post-selection.  

\subsection{Protocol with post-selection} 
\label{appendix:postsel}
Consider a lossy channel, with $T$ the transmittance of the channel, defined as $T = 10^{-0.02 L}$, where $L$ is the length of the quantum channel expressed in kilometers. Let $\eta_{det}$ be the detector efficiency and $P_{dark}$ the dark-count probability per detector. We will assume that the error rate is dominated by dark counts and that clicks due to signals  and  due to dark counts are independent. 
%Since Bob knows the Hidden Matching instance $y=(M; \omega)$, he can perform the (orthogonal) measurement associated to $M$, with only two detectors. 
In this analysis we will not consider photon counting detectors. Now let us consider the probability of a photon  sent by Alice  being detected: it will be transmitted with probability $T$ due to loss in the transmission channel; once it has successfully reached Bob's measurement apparatus, there is a probability $2\beta$ of addressing one of the modes described by the partial matching; finally, once it is rerouted to one of the two detectors, it will be detected only with probability $\eta_{det}$. Combining all these steps, the final probability  for a photon to be detected is $\tilde{T} \coloneqq 2\beta \eta_{det} T $.  Since each photon is independent, the probability that there is at least one click due to the signal is

\begin{equation}
\label{eq:P_s}
    P_s= 1-(1-\tilde{T})^m\,.
\end{equation}
 Hence, compared to standard QKD protocols that only allow to send one copy per channel use, sending multiple photons per channel use exponentially increases the probability of obtaining a click, thereby improving overall performance.

Moreover, the probability of getting zero clicks is the probability of having at the same time no clicks from dark counts and no clicks due to the actual signal, i.e.
 \begin{align*}
\text{Pr}(\text{0 clicks})&= \text{Pr}(\text{no dark counts}) \cdot\text{Pr}(\text{0 clicks due to signal})\\
&= (1-2P_{dark}+ P_{dark}^2) (1-P_s)\;.
\end{align*}
\noindent
On the other hand, the probability of getting a click in both detectors at the same time is
\begin{align*}
\text{Pr}(\text{both detectors click})\!&= \!\text{Pr}(\text{2 dark counts}) \!+\! \text{Pr}(\text{dark count in wrong detector}) \cdot P_s \\
&= P_{dark}^2 +P_{dark}(1-P_{dark}) P_s\;.
 \end{align*}
\noindent
We now assume that Bob aborts the protocol every time he has $0$ clicks or clicks in both detectors, obtaining
\begin{align}
\abort &= (1-2P_{dark}+ P_{dark}^2) (1-P_s) + ( P_{dark}^2 +P_{dark}(1-P_{dark}) P_s)\nonumber\\
& =  P_{dark} +(1 -3P_{dark} + 2 P_{dark}^2)(1-\tilde{T})^m \;.
\label{eq:p(succ)}
\end{align}
\noindent
Now we have  that  the QBER is the probability of giving a wrong answer after the sifting:
\begin{equation}
\text{QBER}= \frac{\text{Pr}(b\neq a \land b\neq \perp)}{1-\abort}\;,
\end{equation}
with $\text{Pr}(b\neq a \land b\neq \perp)= P_{dark} (1-P_{dark})(1-P_s) $, obtaining eventually by direct calculation 
\begin{equation}
\text{QBER} =\frac{P_{dark} - P_{dark}^2}{1- P_{dark} -(1 -3P_{dark} + 2 P_{dark}^2)(1-\tilde{T})^m } (1-\tilde{T})^m \;.
\label{eq:QBER}
\end{equation}
Finally, we have evaluated the $\abort$\footnote{One can notice that even in the case where Alice is sending a large number of copies $\abort$ converges to $P_{dark}$ instead of simply $0$. This is due to the fact that we haven't considered an implementation with photon counting detectors.} and QBER as a function of the number of copies sent $m$. 

\subsection{Protocol without post-selection}
\label{appendix:no-postsel}
In this variant of the protocol, we assume that whenever Bob records 0 clicks or clicks in both detectors, he outputs the bit $b$ randomly, without aborting the protocol. This implies that $\abort = 0$, and consequently, Eq.\ \eqref{eq:QBER} is modified to become\vspace{-0.2cm}
\begin{align}
    \text{QBER} &=   P_{dark} (1-P_{dark})(1-P_s) + \frac{1}{2}\left( \text{Pr}(\text{0 clicks}) + \text{Pr}(\text{both detectors click})\right) \nonumber\\
   & = \frac{1}{2} \left(\left(1-P_{dark}  \right) (1-\tilde{T})^m +P_{dark}\right)\;.
\end{align}
Here, it is important to emphasize that the ability to send multiple photons per channel use enhances once again the overall performance of the protocol, this time by allowing for an exponential decrease in the QBER.
% \begin{align}
%     \text{QBER} &=   P_{dark} (1-P_{dark})(1-P_s) + \frac{1}{2}( \text{Pr}(\text{0 clicks}) + \text{Pr}(\text{both detectors click}))\\
%     & = \left(P_{dark} - P_{dark}^2 +\frac{1}{2} - \frac{3}{2}P_{dark} + P_{dark}^2 \right) (1-\tilde{T})^m + \frac{1}{2}P_{dark}\\
%     & = \left(\frac{1}{2} - \frac{1}{2}P_{dark}  \right) (1-\tilde{T})^m + \frac{1}{2}P_{dark}\\
%    & = \frac{1}{2} \left(\left(1-P_{dark}  \right) (1-\tilde{T})^m +P_{dark}\right)
% \end{align}

\section{Derivation of Theorem \ref{theo:exponential_sep}}
\label{appendix:deriv_th}

In \cite{gavinskyExponentialSeparationsOneway2007} the authors prove that, given $\beta \in (0, 1/4]$\footnote{Note that in this work we have used the notation $\beta$ in place of the $\alpha$ from \cite{gavinskyExponentialSeparationsOneway2007}.} and $\epsilon_1 \in (0,1/2)$, for any deterministic protocol $\pi$ for the $\beta$-partial Matching Problem that  has a communication cost at most $\gamma\epsilon_1 
\sqrt{n/\beta} + \log(\epsilon_1)$, with $\gamma$ a positive constant which we will determine afterwards, the probability of success with respect to the distribution $\mu$  is upper bounded by $\frac{1}{2} + \frac{5}{2}\sqrt{\epsilon_1}$. 
To make the correspondance with  Theorem 2, we  can  write $\epsilon_1$ in terms of the error probability $\epsilon$ by noticing that
$1 - \epsilon   \le \frac{1}{2} + \frac{5}{2}\sqrt{\epsilon_1}$. This in fact implies $\epsilon_1 \ge \frac{4}{25}\left(\frac{1}{2} - \epsilon\right)^2$.
By definition of the distributional complexity we can therefore obtain Theorem 2, where all we need now is to  retrieve  the desired upper bound for $\gamma$.

\subsection{About \texorpdfstring{$\gamma$}{TEXT}}\label{sec:gamma}

Still from \cite{gavinskyExponentialSeparationsOneway2007}, in their analysis they require the value of $\gamma$ to be small enough to satisfy the following inequalities:

\begin{align}
\frac{\epsilon_1^2}{2} \ge & \sum^{4c -2}_{\text{even }k=2} \left(\frac{64e\gamma^2\epsilon_1^2}{k} \right)^{k/2}  \label{eq:gammasomme}\\
\frac{\epsilon_1^2}{2} \ge & \left(8\sqrt{2}e\gamma\epsilon_1\sqrt{\frac{\beta}{n}}\right)^{2c} \label{eq:gamma2}\;,
\end{align}
\noindent
with $c\ge 1$. First, let's  prove that the bound $\gamma \le \frac{1}{8e}$ implies  Eq.\ \eqref{eq:gammasomme}.
We notice that $\gamma \le \frac{1}{8e} \le  \sqrt{\frac{1}{96e}},$
 resulting  in $ 96e\gamma^2  \le  1$. Then we obtain the following bound for $\frac{\epsilon_1^2}{2}$:

\begin{align*}
\frac{\epsilon_1^2}{2} &\ge \frac{96e\gamma^2\epsilon_1^2 }{2} && \text{(From  $ 1\ge 96e\gamma^2  $) } \\
% &\ge \frac{96e\gamma^2\epsilon_1^2 }{3-96e\gamma^2} && \text{(Explanation for the second inequality)} \\
&\ge \frac{32e\gamma^2\epsilon_1^2 }{1-32e\gamma^2} && \text{(Using  $ 2\le 3-96e\gamma^2  $) } \\
&\ge \sum^{\infty}_{k=1} \left(32e\gamma^2 \right)^{k} \epsilon_1^2 && \text{(Given  $\sum_{k=1}^{\infty}x^k = \frac{x}{1-x}$)} \\
&\ge \sum^{\infty}_{k=1} \left(32e\gamma^2 \epsilon_1^2\right)^{k} && \text{(From $\epsilon_1 < 1$)} \\
% &\ge \sum^{\infty}_{k=1} \left(\frac{64e\gamma^2\epsilon_1^2}{2} \right)^{k} && \text{(Explanation for the sixth inequality)} \\
&\ge \sum^{\infty}_{\text{even }k=2} \left(\frac{64e\gamma^2\epsilon_1^2}{k} \right)^{k/2} && \text{(Using $k>1$)} \\
&\ge \sum^{4c -2}_{\text{even }k=2} \left(\frac{64e\gamma^2\epsilon_1^2}{k} \right)^{k/2} \;. && \text{(Truncating the sum).}
\end{align*}

To conclude, we demonstrate that \(\gamma \le \frac{1}{8e}\) implies \eqref{eq:gamma2}. First, we notice that we can rewrite the bound as $\frac{1}{2} \ge \left(4\sqrt{2}e\gamma\right)^{2} $. Then, as before, we derive the desired upper bound for $\frac{\epsilon_1^2}{2}$:
\begin{align*}
\frac{\epsilon_1^2}{2} &\ge \left(8\sqrt{2}e\gamma\epsilon_1\frac{1}{2}\right)^{2} && \text{(From $\frac{1}{2} \ge \left(4\sqrt{2}e\gamma\right)^{2} $)} \\
&\ge \left(8\sqrt{2}e\gamma\epsilon_1\sqrt{\frac{\beta}{n}}\right)^{2} && \text{(Using \(\epsilon_1 <\frac{1}{2}\))} \\
&\ge \left(8\sqrt{2}e\gamma\epsilon_1\sqrt{\frac{\beta}{n}}\right)^{2c} \;, && \text{(Given \(c\ge 1\) and \(\beta/n \le \frac{1}{4}\))}\;.
\end{align*}

\section{Best-known classical protocol}
\label{appendix:best-known}
In this section we analyze the original best-known classical protocol for the $\beta PM$ problem, called $\piOrig$, which has already been sketched in \cite{gavinskyExponentialSeparationsOneway2007, bar-yossefExponentialSeparationQuantum2004}\footnote{These are the only articles that we know of that investigate this particular problem.}. Moreover, we will discuss possible natural alternatives, converging to a slightly better version.
%Notably, one of these suboptimal strategies reflects perfectly a realistic attack by Eve to our HM-QCT protocol.
\\[0.2cm]
\noindent
\textbf{Original classical protocol $\bm{\piOrig}$:} Alice and Bob can exploit their public randomness to agree on a subset  $\s\! \coloneqq \!\{j_1,\dots,j_d\}$ $\in \S_{d}$, where $\mathcal{S}_d$ is the set of all the possible subsets of $d$ indices in $[n]$.  Subsequently, Alice transmits the corresponding bit values $x_{\s} \coloneqq (x_{j_1}, x_{j_2},\dots x_{j_d})$ to Bob. As such, the communication cost of this protocol is  $d$.  Consequently, in this protocol, Bob receives the corresponding $n_{\text{edges}}(d)=\frac{d(d-1)}{2}$ edges\footnote{Whenever we say that Bob receives an edge, say $(j_1,j_2)$,  it implies that he acquires the bit values assigned to the corresponding vertices, i.e.\ $(x_{j_1}, x_{j_2})$.}. We call $\sigma(\s)$ the set of all those edges. Finally, Bob, by knowing $\omega$, can give the right answer whenever he gets at least an edge in the  matching $M$ and randomly guesses the bit otherwise.\\[0.2cm]
\noindent
From our analysis, we find an upper  bound of the error probability:

\begin{theorem}
\label{theo:betapm_communication}
Let $d$ be an integer. An explicit one-way public-coin protocol $\piOrig$ exists with a communication cost $CC(\piOrig)= d$  which solves the $n$-dimensional $\beta PM$ protocol with an error probability for any input at most\footnote{Note that in \eqref{eq:eps_betaPM} we considered $\binom{a}{b} = 0$ whenever $b>a$.}
 \begin{equation}
 \epsOrig(d)=  \sum_{k=0}^{d} \frac{\binom{2 \beta n}{k} \binom{n-2 \beta n}{d-k}}{2\binom{n}{d}} e^{-\frac{k(k-1)}{4\beta n}}\;.
 \label{eq:eps_betaPM}
 \end{equation}
\end{theorem}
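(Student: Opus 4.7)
The plan is to analyze the error probability by conditioning on how many of the $d$ indices in $\s$ happen to lie in the $2\beta n$ vertices covered by the matching $M$. The key structural observation is that whenever at least one edge $(i_l,j_l)$ of $M$ has both endpoints inside $\s$, Bob can XOR the two revealed bits with $\omega_l$ and recover $a$ with certainty; otherwise, since the induced bits carry no information on $a$ under the promise, his best strategy is a uniform guess. Consequently, the overall error equals exactly one half of the probability that no edge is fully revealed, and by symmetry this probability depends only on $d$ and on the matching structure, not on the specific input $(x,M,\omega)$.

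Next, I would introduce $K$, the number of indices of $\s$ that fall in the $2\beta n$ matched vertices. Since $\s$ is drawn via public randomness uniformly from $d$-subsets of $[n]$, $K$ is hypergeometric with
\begin{equation*}
P(K=k) = \frac{\binom{2\beta n}{k}\binom{n-2\beta n}{d-k}}{\binom{n}{d}}.
\end{equation*}
Conditioned on $K=k$, the probability of revealing no edge is the number of size-$k$ independent sets of the perfect matching on $2\beta n$ vertices, divided by $\binom{2\beta n}{k}$. A size-$k$ independent set is obtained by choosing $k$ out of $\beta n$ edges and one endpoint from each, giving
\begin{equation*}
P(\text{no edge}\mid K=k) = \frac{2^k \binom{\beta n}{k}}{\binom{2\beta n}{k}} = \prod_{i=0}^{k-1}\frac{2(\beta n - i)}{2\beta n - i}.
\end{equation*}

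The main analytic step is to bound this product. Rewriting each factor as $1 - \tfrac{i}{2\beta n - i}$, applying $1-x \le e^{-x}$, and loosening the denominator to $2\beta n$ gives a geometric-like sum in the exponent that collapses to $-k(k-1)/(4\beta n)$. Multiplying by $P(K=k)$, summing over $k$, and prefixing the factor $1/2$ from the uniform guess in the no-edge event yields exactly the closed-form bound claimed.

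I do not expect any real obstacle: the argument is elementary combinatorics combined with a single use of $1-x \le e^{-x}$. The only bookkeeping to watch is the convention $\binom{a}{b}=0$ for $b>a$, which automatically handles the regimes $k>\beta n$ (no large independent set in $M$) and $d-k > n - 2\beta n$ (not enough unmatched vertices), so the sum can be written over $k=0,\ldots,d$ without extra case analysis.
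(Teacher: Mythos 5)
Your proposal is correct and follows essentially the same route as the paper's proof: the same hypergeometric conditioning on the number of revealed matched vertices, the same product bound for the no-edge probability (your closed form $2^k\binom{\beta n}{k}/\binom{2\beta n}{k}$ is identical to the paper's sequential product $\prod_{l=1}^{k}\frac{2\beta n-2(l-1)}{2\beta n-(l-1)}$), and the same use of $1-x\le e^{-x}$ to reach $e^{-k(k-1)/(4\beta n)}$. No gaps.
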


 \begin{proof}
 First, we define $\Mlist$ as the list of all the vertices in the $\beta$-matching $M$.  For example, let $n=4$ and $M$ be a perfect matching (i.e.\ $\beta =1/2$) such that $M = \{(1,2),(3,4)\}$, then $\Mlist = \{1,2,3,4\}$.
We  call $d_{M}$  the  number of indices in  $\s$  that are part of $\Mlist$, i.e.  $d_{M} \coloneqq \lvert\s \cap \Mlist\rvert$.
One can evaluate  probability distribution of $d_{M}$:
\begin{equation}
\label{eq:d_beta}
P\left( d_{M}=k \right) = \frac{\binom{2 \beta n}{k} \binom{n-2 \beta n}{d-k}}{\binom{n}{d}}\;,
\end{equation}
 where  $\binom{n}{d}$ is the number of ways to pick $d$ indices in $[n]$,  $\binom{2 \beta n}{k}$ is the number of ways to pick $k$ indices  which are part of a $\beta$-matching $\Mlist$  and $ \binom{n-2 \beta n}{d-k}$ is instead the number of ways to pick $d-k$ indices which are not part of a $\beta$-matching $M$. 

 We now want to evaluate the probability of Bob not receiving any edge which is part of his $\beta$-matching for a known value of $d_{M}$.
Trivially, whenever Bob doesn't receive any index in $\Mlist$ then the probability of not receiving any edge which is part of $M$, i.e.\ $d_{M}=0$,  is  always equal to $1$, otherwise we have
\begin{equation}
\begin{aligned}
P(\nexists \e \in \sigma(\s)\, \, \text{s.t.}\,\,\e \in M|d_{M} = k) &= \prod_{l=1}^{k} \left(\frac{2 \beta n - 2(l-1)}{2 \beta n - (l-1)}  \right)\\
& = \prod_{l'=0}^{k-1} \left(1- \frac{l'}{2 \beta n - l'}\right) \\
&\le \prod_{l'=0}^{k-1} \left(1- \frac{l'}{2 \beta n}  \right) \\
& \le e^{-\sum_{l'=0}^{k-1} \frac{l'}{2\beta n}} \\
& \le e^{-\frac{k(k-1)}{4\beta n}}\;,
\end{aligned}
\label{eq:PE_l}
\end{equation}
where in the first line we used that, after having checked that the first $l-1$ indices in $\s_{d_{M}}$  do not form any edge in $M$, $2 \beta n - 2(l-1)$ is the remaining number of possible indices in $\Mlist$ that won't form an edge  in $M$ when paired with the indices in the already extracted list $\{j'_1, \dots j'_{l-1}\}$, and  $2 \beta n - (l-1)$ is the  total number of remaining indices in $\Mlist$. In the second line we have simply replaced $l$ with $l' \coloneqq l-1$. The third line is obtained by noticing that $\frac{a}{x-a} > \frac{a}{x}$ for any $x,\,a>0$ with $x>a$. The fourth and fifth lines come from $ 1-x < e^{-x}$ and $\sum_{i=0}^{k-1} i= k(k-1)/2$ respectively. 

Finally, since Bob, by knowing $\omega$, can give the right answer whenever he gets at least an edge in the  matching $M$ and randomly guesses the bit otherwise, the error probability  for the best-known protocol  is at most
\begin{align*}
\frac{1}{2} \sum_{k=0}^{d}  P\left( d_{M}=k \right)   & P(\nexists \e \in \sigma(\s) \, \, \text{s.t.}\,\,\e \in M|d_{M} = k)\\
& \le  \frac{1}{2} \sum_{k=0}^{d}  P\left( d_{M}=k \right)e^{-\frac{k(k-1)}{4\beta n}}    \\
& \le  \sum_{k=0}^{d} \frac{\binom{2 \beta n}{k} \binom{n-2 \beta n}{d-k}}{2\binom{n}{d}} e^{-\frac{k(k-1)}{4\beta n}}\;.
\end{align*}
where in the second line we used the fact that $d_{M}$ cannot be larger than $d$, Eq.\ \eqref{eq:PE_l} in the third line and \eqref{eq:d_beta} in the last line.
 \end{proof}
 One might consider varying the probability distribution for the subset $\bm{s}$ of indices that Alice sends to Bob, leveraging the knowledge that Bob's input is an edge-disjoint matching with $\beta n$ edges. However, due to the specific structure of Bob's input, the probability of any random edge $(i,j)$ being part of Bob's matching $M$ is uniform. The uniformity of this distribution indicates that modifying the probability distribution of the shared subset \( \bm{s} \) will not lead to a more optimized protocol.

Another alternative would be to change the type of information that is shared: instead of Alice sending the values of $d$ random vertices, she could directly send the values of $d$ edges.  However, choosing these edges randomly results in a suboptimal choice. This is because edge values are not independent; if you have two contiguous edges, say $(i_1, i_2)$ and $(i_2, i_3)$, you can infer the edge value of $(i_1, i_3)$ as $x_1 \oplus x_3 = (x_1 \oplus x_2) \oplus (x_2 \oplus x_3)$. The optimal choice would be to send only sets of $d$ edges that form open paths. In this way, one would effectively send the information about $n_{\text{edges}}(d) = \frac{d(d+1)}{2}$ edges. 
\\[0.2cm]
\noindent
\textbf{Classical protocol $\bm{\piBetaPM}$:} Alice and Bob can exploit their public randomness to agree on a subset  $\s\! \coloneqq \!\{j_1,\dots,j_{d+1}\}$ $\in \S_{d+1}$.  Subsequently, Alice transmits the corresponding $d$ edge values $e_{\s} \coloneqq (x_{j_1}\oplus x_{j_2}, x_{j_2}\oplus x_{j_3},\dots x_{j_d}\oplus x_{j_{d+1}})$ to Bob. As such, the communication cost of this protocol is  $d$. Since  the edges $e_{\s}$ form an open path, Bob can effectively infer  the value of a total of  $n_{\text{edges}(d)=}\frac{d(d+1)}{2}$ edges. Finally, as in the original protocol, Bob, by knowing $\omega$, can give the right answer whenever he gets at least an edge in the  matching $M$ and randomly guesses the bit otherwise.\\[0.2cm]

\noindent
This is evidently a slightly better protocol than the original, resulting in an error probability for a communication cost $d$ equivalent to the one obtained in $\piOrig$  for a communication cost $d+1$.
\begin{corollary}
\label{corol:betapm_communication}
Let $d$ be an integer. An explicit one-way public-coin protocol $\piBetaPM$ exists with a communication cost $CC(\piBetaPM)= d$  which solves the $n$-dimensional $\beta PM$ protocol with an error probability for any input at most
 \begin{equation}
 \begin{aligned}
 \epsBetaPM(d)&=\epsOrig(d+1)\\
 &=\sum_{k=0}^{d+1} \frac{\binom{2 \beta n}{k} \binom{n-2 \beta n}{d+1-k}}{2\binom{n}{d+1}} e^{-\frac{k(k-1)}{4\beta n}}\;.
 \label{eq:eps_betaPM2}
 \end{aligned}
 \end{equation}
\end{corollary}
\noindent

To conclude, since the ultimate goal of solving the $\beta PM$ problem is to share the parity value of at least one of Bob's $\beta n$ edges, of which Alice has no information, it is clear that an optimal strategy will aim to maximize the number of edges shared at a fixed communication cost. This is why we believe that randomly selecting edges that form an open path serves as a natural heuristic for developing an efficient protocol and currently represents the best-performing approach.

%
%\begin{equation}
%\Pgen \le \frac{1}{2} +2^{1 - \frac{1}{3}}\left(\sqrt[3]{-q + \sqrt{\frac{-\Delta}{27}}} + 
%\sqrt[3]{-q - \sqrt{\frac{-\Delta}{27}}}\right)+ \delta\;.
%\label{eq:P_gen_bound_proof}
%\end{equation}
%Now we prove that the second  term with $\sqrt[3]{\cdot}$ is negative by noticing that $-\Delta = 4p^3 + 27q^2 > 27q^2 $, resulting in $\sqrt{-\frac{\Delta}{27}} > \sqrt{q^2} =|q|=-q$. 
%%This implies that 
%%\begin{equation}
%% \sqrt[3]{-q - \sqrt{-\frac{\Delta}{27}}} <0 \;. 
%% \label{eq:negative_term}
%%\end{equation}
%Now we can  further upper bound the guessing probability in \eqref{eq:P_gen_bound_proof}
%
%\begin{align*}
%\Pgen & \le \frac{1}{2} +2^{1 - \frac{1}{3}}\left(\sqrt[3]{-q + \sqrt{\frac{(4p^3 + 27q^2)}{27}}} \right)+ \delta\\
%& \le \frac{1}{2} +2\left(\sqrt[3]{-q} + \sqrt{\frac{p}{3}} \right)+ \delta\;,
%\end{align*}
%where in the  last inequality we used the fact that $\sqrt[d]{\cdot}$ is a  subadditive function for any integer $d$ and that $q<0$.
%

\end{document}